\newtheorem{theorem}{Theorem}[subsection]
\newtheorem{lemma}[theorem]{Lemma}
\newtheorem{proposition}[theorem]{Proposition}
\theoremstyle{definition}
\newtheorem{definition}[theorem]{Definition}
\newtheorem{remark}[theorem]{Remark}
\newtheorem{example}[theorem]{Example}
\def\l@paragraph{\@tocline{3}{0pt}{1pc}{9pc}{}}
\newcommand\K{\mathbb{K}}
\newcommand\N{\mathbb{N}}
\newcommand\M[1]{\mathscr{#1}}
\newcommand\id[1]{\text{Id}_{#1}}
\newcommand{\G}{Gr\"obner}
\newcommand\spol[1]{\text{sp}\left(#1\right)}
\newcommand\X[1]{X^{\left(#1\right)}}
\newcommand\F{\longrightarrow}
\newcommand\lm[1]{\text{lm}\left(#1\right)}
\newcommand\supp[1]{\text{supp}\left(#1\right)}
\newcommand\lgen[1]{\text{lg}\left(#1\right)}
\newcommand\lc[1]{\text{lc}\left(#1\right)}
\newcommand{\RO}{\textbf{RO}\left(G,\ <\right)}
\newcommand\noy[1]{\text{ker}^{-1}\left(#1\right)}
\newcommand\im[1]{\text{im}\left(#1\right)}
\newcommand\red[1]{\text{red}\left(#1\right)}
\newcommand\nf[1]{\text{nf}\left(#1\right)}
\newcommand\obs[1]{\text{obs}\left(#1\right)}
\titleformat{\subsubsection}[runin]
{\normalfont\bfseries}
{\thesubsubsection.}{.5em}{}[.]
\begin{document}

\title{A lattice formulation\\ 
of the noncommutative $F_4$ procedure}
\author{Cyrille Chenavier~\footnote{\noindent Universit\'e Paris-Est Marne-la-Vallée, cyrille.chenavier@u-pem.fr.}}
\date{}
\maketitle

\begin{abstract}

  We introduce a new procedure for constructing noncommutative \G\ bases using a lattice formulation of completion. This leads to a lattice description of the noncommutative $F_4$ procedure. Our procedure is based on the lattice structure of reduction operators which provides a lattice description of the confluence property. We relate reduction operators to noncommutative \G\ bases, we show the Diamond Lemma for reduction operators and we deduce the lattice interpretation of the $F_4$ procedure. Finally, we illustrate our procedure with a complete example.

\end{abstract}

\textbf{Keywords:} lattice structure, noncommutative $F_4$ procedure, reduction operators.

\tableofcontents

\section{Introduction}

The objective of the paper is to introduce a new procedure for constructing noncommutative \G\ bases which turns out to be a lattice formulation of the noncommutative $F_4$ procedure. This formulation is based on a description of the completion procedure using linear algebra techniques and is motivated by the development of effective methods in homological algebra using such techniques~\cite{MR846601, MR1608711, MR3334140, guiraud:hal-01006220, MR2110434, MR0265437}.

The $F_4$ procedure is an improvement of the Buchberger's one where several $S$-polynomials are reduced into normal forms simultaneously. Improvements and optimisations of Buchberger's procedure were first introduced in the context of polynomial ideals, where selections strategies~\cite{MR2781934, MR1784751, MR1263871} and criteria for avoiding useless critical pairs~\cite{MR575678, MR2094215, MR2035234, MR2159476, MR2723052} were investigated. The $F_4$ completion procedure was also introduced for polynomial ideals~\cite{MR1700538}, it is adapted to the noncommutative case~\cite{xiu2012non} and an implementation of this adaptation can be found in the system MAGMA.

Our lattice formulation of $F_4$ uses the approach due to Bergman~\cite{MR506890} who described reduction systems over noncommutative algebras by \emph{reduction operators}. The latter admit a lattice structure inducing lattice formulations of confluence and completion that we present now.

\paragraph{Lattice formulations of confluence and completion.}

A reduction operator relative to a well-ordered set $\left(G,\ <\right)$ is an idempotent linear endomorphism $T$ of the $\K{}$-vector space $\K{G}$ spanned by $G$ such that for every $g\ \notin\ \im{T}$, $T(g)$ is a linear combination of elements of $G$ strictly smaller than $g$. We denote by $\RO$ the set of reduction operators relative to $\left(G,\ <\right)$. 

From~\cite[Proposition 2.1.14]{MR3673007}, the kernel map induces a bijection between $\RO$ and subspaces of $\K{G}$, so that $\RO$ admits a lattice structure defined in terms of kernels:
\begin{itemize}
\item[\textbf{i.}] $T_1\preceq T_2\ $ if $\ \ker\left(T_2\right)\ \subseteq\ \ker\left(T_1\right)$,
\item[\textbf{ii.}] $T_1\wedge T_2\ =\ \ker^{-1}\left(\ker\left(T_1\right)+\ker\left(T_2\right)\right)$,
\item[\textbf{iii.}] $T_1\vee T_2\ =\ \ker^{-1}\left(\ker\left(T_1\right)\cap\ker\left(T_2\right)\right)$.
\end{itemize}
Given a subset $F$ of $\RO$, we denote by $\wedge F$ the lower-bound of $F$, that is the reduction operator whose kernel is the sum of kernels of elements of $F$. We get the following lattice formulation of confluence: $F$ is said to be \emph{confluent} if the image of $\wedge F$ is equal to the intersection of images of elements of $F$. Recall from~\cite[Corollary 2.3.9]{MR3673007} that $F$ is confluent if and only if the reduction relation on $\K{G}$ defined by $v\ \F\ T(v)$ for every $T\in F$ and every $v\notin\im{T}$ is confluent. Moreover, recall from~\cite[Theorem 3.2.6]{MR3673007} that the completion of $F$ is done by the operator $C^F\ =\ \left(\wedge F\right)\vee\left(\vee\overline{F}\right)$, where $\overline{F}$ is a subset of $\RO$ defined from $F$ and $\vee\overline{F}$ is the upper-bound of $\overline{F}$, that is $F\cup\{C^F\}$ is a confluent subset of $\RO$.

In Section~\ref{proceduree de complétion en termes d'opérateurs de réduction}, the operator $C^F$ is used to reduce simultaneously several $S$-polynomials into normal forms using a triangular process such as the $F_4$ procedure does. For that, we introduce \emph{presentations by operators} which relate reduction operators to noncommutative \G\ bases.

\paragraph{Reduction operators and presentations of algebras.}

A presentation by operator of an associative \textbf{A} is a triple $\left(X,\ <,\ S\right)$, where $X$ is a set, $<$ is a monomial order on the set of noncommutative monomials $X^*$ and $S$ is a reduction operator relative to $\left(X^*,\ <\right)$ such that $\textbf{A}$ is isomorphic to the quotient of the free algebra over $X$ by the two-sided ideal spanned by $\ker\left(S\right)$.

In order to describe all the reductions induced by $S$ we consider the "extensions" of $S$, that is the operators which applied to a monomial $w_1w_2w_3$ gives $w_1S(w_2)w_3$. The presentation $\left(X,\ <,\ S\right)$ is said to be \emph{confluent} if the set of extensions of $S$ is a confluent subset of $\RO$. From~\cite[Proposition 3.3.10]{MR3673007}, the presentation $\left(X,\ <,\ S\right)$ is confluent if and only if the set of elements $w\ -\ S(w)$ with $w\ \notin\ \im{S}$ is a noncommutative \G\ basis of $I\left(\ker\left(S\right)\right)$. This link between reduction operators and noncommutative \G\ bases enables us to show the Diamond Lemma in terms of reduction operators in Proposition~\ref{Characterisation of confluent presentations}.

Our procedure for constructing confluent presentations by operators, and thus noncommutative \G\ bases, is given in Section~\ref{Énoncé de l'proceduree de complétion}. At the step number $d$ of the procedure, we reduce the $S$-polynomials of the current presentation $\left(X,\ <,\ S^d\right)$ into normal forms using a set of reduction operators $F_d$. The operator at the step $d+1$ is $S^{d+1}\ =\ S^d\wedge C^{F_d}$. Denoting by $\overline{S}$ the lower-bound of all the operators $S^d$, the triple $\left(X,\ <,\ \overline{S}\right)$ is called the \emph{completed presentation} of \textbf{A}. The main result of the paper is Theorem~\ref{proceduree de Buchberger par opérateur} which asserts that a completed presentation is confluent. In Section~\ref{Exemples annexes}, we show how to implement our procedure with a complete example as an illustration.

\begin{center}
\large\textbf{Organisation of the paper}
\end{center}

Section~\ref{Lattice structure and completion} is a recollection of results from~\cite{MR3673007}: we recall the definitions and properties of reduction operators, their confluence and completion used in the sequel. In Section~\ref{Presentations by operator}, we define presentations by operators, the confluence property of such presentations, we formulate and we show the Diamond Lemma for reduction operators. In Section~\ref{Énoncé de l'proceduree de complétion}, we write our completion procedure and define completed presentations. In Section~\ref{Correction de l'proceduree de complétion}, we show that a completed presentation is confluent. In Section~\ref{Exemples annexes}, we illustrate our completion procedure with a complete example based on the computation of lattice operations of reduction operators.

\section{Reduction operators}\label{Section reduction operators}

\subsection{Lattice structure of reduction operators}\label{Lattice structure and completion}

Throughout the paper, $\K{}$ denotes a commutative field. Given a set $G$, we denote by $\K{G}$ the vector space spanned by $G$. Given a well-order $<$ on $G$, the leading generator of a nonzero element $v\ \in\ \K{G}$ is written $\lgen{v}$. We extend the order $<$ on $G$ into a partial order on $\K{G}$ in the following way: we have $u<v$ if $u\ =\ 0$ and $v\ \neq\ 0$ or if $\text{lg}(u)\ <\ \text{lg}(v)$.

\begin{definition}

A \emph{reduction operator relative to} $\left(G,\ <\right)$ is an idempotent endomorphism $T$ of $\K{G}$ such that for every $g\ \in\ G$, we have $T(g)\ \leq\ g$. We denote by $\textbf{RO}\left(G,\ <\right)$ the set of reduction operators relative to $\left(G,\ <\right)$. Given $T\ \in\ \RO$, a generator $g\ \in\ G$ is said to be a \emph{T-normal form} or \emph{T-reducible} according to $T(g)\ =\ g$ or $T(g)\ \neq\ g$, respectively. We denote by $\nf{T}$ the set of $T$-normal forms and by $\red{T}$ the set of $T$-reducible generators.

\end{definition}

\paragraph{Lattice structure, confluence and completion.}

Recall from~\cite[Proposition 2.1.14]{MR3673007} that the restriction of the kernel map $T\ \longmapsto\ \ker\left(T\right)$ to $\RO$ is a bijection. Using the inverse $\ker^{-1}$, the set $\RO$ admits a lattice structure for the operations 
\begin{itemize}
\item[\textbf{i.}] $T_1\ \preceq\ T_2$ if $\ker\left(T_2\right)\ \subseteq\ \ker\left(T_1\right)$,
\item[\textbf{ii.}] $T_1\wedge T_2\ =\ \noy{\ker\left(T_1\right) + \ker\left(T_2\right)}$,
\item[\textbf{iii.}] $T_1\vee T_2\ =\ \noy{\ker\left(T_1\right)\cap\ker\left(T_2\right)}$.
\end{itemize}
Recall from~\cite[Lemma 2.1.18]{MR3673007} that we have the following implication
\begin{equation}\label{Inclusion of images}
T_1\ \preceq\ T_2\ \Longrightarrow\ \nf{T_1}\ \subseteq\ \nf{T_2}~\footnote{In~\cite{MR3673007}, the notation $\red{T}$ stands for \emph{reduced generators} and correspond to $\nf{T}$ in the present paper. The notation $\red{T}$ of the present paper corresponds to $\text{nred}(T)$ of~\cite{MR3673007} which means \emph{nonreduced generators}.}.
\end{equation}

Given a nonempty subset $F$ of $\RO$, we denote by $\nf{F}$ and $\wedge F$ the set of normal forms for each $T\ \in\ F$ and the lower-bound of $F$, respectively. From (\ref{Inclusion of images}), $\nf{\wedge F}$ is included in $\nf{T}$ for every $T\in F$, so that $\nf{\wedge F}$ is included in $\nf{F}$. We write 
\begin{equation} \label{Definition of obstructions}
\obs{F}\ =\ \nf{F}\setminus\nf{\wedge F}.
\end{equation}
The set $F$ is said to be \emph{confluent} if $\obs{F}$ is the empty set. In Section~\ref{Correction de l'proceduree de complétion}, we use two characterisations of the confluence property in terms of reduction operators. First, recall from~\cite[Theorem 2.2.5]{MR3673007} that $F$ is confluent if and only if it has the \emph{Church-Rosser property}, that is for every $v\in\K{G}$, there exist $T_1,\ \cdots,\ T_r\ \in\ F$ such that $\left(\wedge F\right)(v)\ =\ \left(T_r\circ\cdots\circ T_1\right)(v)$. Moreover, from~\cite[Proposition 2.2.12]{MR3673007}, $F$ is confluent if and only if it is \emph{locally confluent}, that is for every $v\ \in\ \K{G}$ and for every $\left(T,\ T'\right)\ \in\ F\times F$, there exist $v'\in\K{G}$ and $T_1,\ \cdots ,\ T_r,\ T'_1,\ \cdots ,\ T'_k\ \in\ F$ such that $v'\ =\ \left(T_r\circ\cdots\circ T_1\right)\left(T(v)\right)$ and $v'\ =\ \left(T'_k\circ\cdots\circ T'_1\right)\left(T'(v)\right)$. Finally, we recall how a set of reduction operators is completed into a confluent one.

\begin{definition}\label{definition of complement}
  A \emph{complement of F} is an element $C$ of $\RO$ such that
\begin{itemize}
\item[\textbf{i.}]\label{equivalence relation} $\left(\wedge F\right)\wedge C=\wedge F$,
\item[\textbf{ii.}]\label{minimality} $\obs{F}\ \subseteq\ \red{C}$.
\end{itemize}
The \emph{F-complement} is the operator $C^F\ =\ \left(\wedge F\right)\vee\left(\vee\overline{ F}\right)$, where $\vee\overline{F}$ is equal to $\ker^{-1}\left(\K{\nf{F}}\right)$. 
\end{definition}

Recall from~\cite[Proposition 3.2.2]{MR3673007} that a reduction operator $C$ satisfying $\left(\wedge F\right)\wedge C=\wedge F$ is a complement of $F$ if and only if $F\cup\{C\}$ is confluent. Recall from~\cite[Theorem 3.2.6]{MR3673007} that the $F$-complement is a complement of $F$.

\subsection{Presentations by operators}\label{Presentations by operator}

In this section, we relate the confluence property for reduction operators to noncommutative \G\ bases and we prove the Diamond Lemma for reduction operators.

Given a set $X$, we denote by $X^*$ the set of noncommutative monomials over $X$ and we identify the free algebra over $X$ with $\K{X^*}$, equipped with the multiplication induced by concatenation of monomials. A \emph{monomial order} over $X^*$ is a well-founded total strict order $<$ on $X^*$ such that the following conditions are fulfilled:
\begin{itemize}
\item[\textbf{i.}] $1\ <\ w$ for every monomial $w$ different from 1,
\item[\textbf{ii.}] for every $w_1,\ w_2,\ w,\ w'\in X^*$ such that $w<w'$, we have $w_1ww_2\ <\ w_1w'w_2$.
\end{itemize}
For any $f\in\K{X^*}$, the leading monomial of $f$ is written $\lm{f}$ instead of $\lgen{f}$.

\begin{definition}\label{def pres by op}

A \emph{presentation by operator} of an associative algebra \textbf{A} is a triple $\left(X,\ <,\ S\right)$ where
\begin{itemize}
\item[\textbf{i.}] $X$ is a set and $<$ is a monomial order on $X^*$,
\item[\textbf{ii.}]  $S$ is a reduction operator relative to $\left(X^*,\ <\right)$ such that $\textbf{A}$ is isomorphic to $\K{X^*}/I\left(\ker(S)\right)$, where $I\left(\ker(S)\right)$ is the two-sided ideal spanned by $\ker\left(S\right)$.
\end{itemize}

\end{definition}

We fix an algebra \textbf{A} together with a presentation by operator $\left(X,\ <,\ S\right)$ of \textbf{A}. For every integer $n$, we denote by $\X{n}$ and $\X{\leq n}$ the set of monomials of length $n$ and of length smaller or equal to $n$, respectively. For every integers $n$ and $m$ such that $(n,\ m)$ is different from $(0,\ 0)$, we consider the reduction operator
\[S_{n,m}\ =\ \id{\K{\X{\leq n+m-1}}}\ \oplus\ \Big(\id{\K{\X{n}}}\otimes S\otimes\id{\K{\X{m}}}\Big).\]
Explicitly, for every $w\in X^*$, $S_{n,m}(w)$ is defined by: if the length of $w$ is strictly smaller than $n+m$, then $S_{n,m}(w)\ =\ w$, else we let $w=w_1w_2w_3$ where $w_1$ and $w_3$ have length $n$ and $m$, respectively and we have $S_{n,m}(w)=w_1S(w_2)w_3$. We also let $S_{0,0}=S$. 

\begin{definition}

  The set of all the operators $S_{n,m}$ with $(n,\ m)\ \in\ \N^2$, is called the \emph{reduction family} of $\left(X,\ <,\ S\right)$. The presentation $\left(X,\ <,\ S\right)$ is said to be \emph{confluent} if its reduction family is a confluent subset of $\textbf{RO}\left(X^*,\ <\right)$.

\end{definition}

Recall from~\cite[Proposition 3.3.10]{MR3673007} that $\left(X,\ <,\ S\right)$ is confluent if and only if the set of elements $w\ -\ S(w)$ with $w\ \in\ \red{S}$ is a noncommutative \G\ basis of $I\left(\ker(S)\right)$, that is $\red{S}$ spans leading monomials of $I$ as a monomial ideal.

\begin{example}\label{Definition of the braided monoid}

  Let $X=\{x,\ y,\ z\}$ and let $<$ be the deg-lex order induced by $x\ <\ y\ <\ z$. Consider the algebra presented by $\left(X,\ <,\ S\right)$ where $S$ is defined on the basis $X^*$ by $S(yz)\ =\ x$, $S(zx)\ =\ xy$ and $S(w)\ =\ w$ for every monomial $w$ different from $yz$ and $zx$. We have
\[\begin{split}
yxy-xx\ &=\ \left(yxy-yzx\right)\ -\ \left(xx-yzx\right)\\
&=\ \left(yS(zx)-yzx\right)\ -\ \left(S(yz)x-yzx\right)\\
&=\ A\ +\ B\\
\end{split}\]
where $A\ =\ \left(S_{1,0}-\id{\K{X^*}}\right)(yzx)$ and $B\ =\ \left(\id{\K{X^*}}-S_{0,1}\right)(yzx)$. Hence, $yxy\ -\ xx$ belongs to $\ker\left(\wedge F\right)$ where $F$ is the reduction family of the presentation, so that $yxy$ is $\wedge F$-reducible. Moreover, $yxy$ belongs to $\nf{F}$, so that $yxy$ belongs to $\obs{F}$ and $F$ is not confluent. Thus, $\left(X,\ <,\ S\right)$ is not a confluent presentation of \textbf{A}.

\end{example}

In Section~\ref{Énoncé de l'proceduree de complétion} we formulate our procedure for constructing confluent presentations by operators using \emph{critical branchings} that we introduce in Definition~\ref{Critical Branchings}. These branchings are analogous to \emph{ambiguities} for \G\ bases. An ambiguity with respect to $<$ of a subset $R$ of $\K{X^*}$ is a tuple $b=(w_1,\ w_2,\ w_3,\ f,\ g)$ where $w_1,\ w_2,\ w_3$ are monomials such that $w_2\ \neq\ 1$, $f,\ g$ belong to $R$ and one of the following two conditions is fulfilled:
\begin{enumerate}
\item\label{chevauchement} $w_1w_2\ =\ \lm{f}$ and $w_2w_3\ =\ \lm{g}$.
\item\label{inclusion} $w_1w_2w_3\ =\ \lm{f}$ and $w_2\ =\ \lm{g}$.
\end{enumerate}
The $S$-polynomial of $b$ is written $\spol{b}$, that is $\spol{b}\ =\ fw_3\ -\ w_1g$ or $\spol{b}\ =\ f\ -\ w_1gw_3$ according to $b$ is of the form~\ref{chevauchement} or~\ref{inclusion}, respectively. The ambiguity $b$ is said to be \emph{solvable relative to} $<$ if there exists a decomposition
\begin{equation}\label{S w decompo}
\spol{b}\ =\ \sum_{i=1}^n\lambda_iw_if_i{w'}_i,
\end{equation}
where, for every $i\in\{1,\cdots,n\}$, $\lambda_i$ is a non-zero scalar, $w_i,\ {w'}_i$ are monomials and $f_i$ is an element of $R$ such that $w_i\lm{f_i}{w'}_i\ <\ w_1w_2w_3$. The Diamond Lemma~\cite[Theorem 1.2]{MR506890} asserts that $R$ is a noncommutative \G\ basis of $I(R)$ if and only if every critical branching of $R$ with respect to $<$ is solvable relative to $<$.

Our purpose is to formulate and to prove the Diammond Lemma for reduction operators. Until the end of the section, we fix some notations: \textbf{A} is an associative algebra and $\left(X,\ <,\ S\right)$ is a presentation by operator of \textbf{A}. For every pair of integers $(n,\ m)$, we consider the operator $S_{n,m}$ defined such as the beginning of the section. We denote by $R$ the set of elements $w\ -\ S(w)$ with $w\ \in\ \red{S}$. 

\begin{definition}\label{Critical Branchings}
A \emph{critical branching} of $\left(X,\ <,\ S\right)$ is a triple $b\ =\ \left(w,\ (n,\ m),\ (n',\ m')\right)$ where $w$ is a monomial and $(n,\ m)$ and $(n',\ m')$ are couples of integers 
such that
\begin{itemize}
\item[\textbf{i.}] $w$ belongs to $\red{S_{n,m}}\cap\red{S_{n',m'}}$,
\item[\textbf{ii.}] $n\ =\ 0$ or $n'\ =\ 0$,
\item[\textbf{iii.}] $m\ =\ 0$ or $m'\ =\ 0$,
\item[\textbf{iv.}] $n+n'+m+m'$ is strictly smaller than the length of $w$.
\end{itemize}
The \emph{S-polynomial} of $b$ is $\text{SP}(b)\ =\ S_{n,m}(w)\ -\ S_{n',m'}(w)$ and the \emph{source} of $b$ is the monomial $w$.

\end{definition}

\begin{remark}

The roles of $(n,\ m)$ and $(n',\ m')$ being symmetric, we do not distinguish $\left(w,\ (n,\ m),\ (n',\ m')\right)$ and $\left(w,\ (n',\ m'),\ (n,\ m)\right)$.

\end{remark}

\begin{definition}

Let $w\in X^*$ and let $f\in\K{X^*}$. We say that $f$ admits a $\left(S,\ w\right)$-\emph{type decomposition} if it admits a decomposition
\[f\ =\ \sum_{i=1}^n\lambda_i w^1_i\left(w_i-S(w_i)\right)w^2_i,\]
where, for every $i\in\{1,\cdots,n\}$, $\lambda_i$ is a non-zero scalar, $w^1_i$, $w^2_i$ and $w_i$ are monomials such that $w_i$ belongs to $\red{S}$ and $w^1_iw_iw^2_i\ <\ w$.

\end{definition}

\begin{lemma}\label{Lemma before characterisation of confluent presentations}

 There is a one-to-one correspondence $b\ \longmapsto\ \tilde{b}$ between critical branchings of $\left(X,\ <,\ S\right)$ and ambiguities of R with respect to $<$. Moreover, a critical branching b of source w admits a $(S,w)$-type decomposition if and only if $\tilde{b}$ is solvable relative to $<$.

\end{lemma}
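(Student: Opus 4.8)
The plan is to reduce everything to elementary bookkeeping by first describing the two sides in a uniform way. Recall that $R$ consists of the elements $f_{w'} := w'-S(w')$ for $w'\in\red{S}$; since $S(w')<w'$ for such $w'$, the monomial $w'$ occurs in $f_{w'}$ with coefficient $1$ and dominates every monomial of $S(w')$, so $\lm{f_{w'}}=w'$ and $w'\longmapsto f_{w'}$ is a bijection from $\red{S}$ onto $R$. Next, writing a monomial $w$ of length at least $n+m$ as $w=u_1vu_2$ with $u_1,u_2$ of respective lengths $n,m$, the definition of $S_{n,m}$ gives $S_{n,m}(w)=u_1S(v)u_2$, whence $w\in\red{S_{n,m}}$ if and only if $v\in\red{S}$, i.e. if and only if $f_v\in R$ with $\lm{f_v}=v$. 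I would record these two observations first.

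To construct $b\longmapsto\tilde b$ I would analyse the positions of the two reduction windows $v,v'$ of a critical branching $b=(w,(n,m),(n',m'))$ using conditions ii--iv, the symmetry remark letting me normalise the labelling in each configuration. If $(n,m)=(0,0)$ (or symmetrically $(n',m')=(0,0)$) then $v=w$ while $v'$ is the inner factor of $w=u_1'v'u_2'$; setting $(w_1,w_2,w_3)=(u_1',v',u_2')$, $f=f_w$ and $g=f_{v'}$ yields an inclusion ambiguity, since $w_1w_2w_3=w=\lm{f}$ and $w_2=v'=\lm{g}$, the requirement $w_2\neq 1$ being exactly condition iv. Otherwise neither window is $(0,0)$, and conditions ii--iii force, up to the symmetry, $n=0<m$ and $m'=0<n'$; then $v$ is a proper prefix and $v'$ a proper suffix of $w$, condition iv ($n'+m<|w|$) makes them overlap, and writing their overlap as $w_2$ (with $w=w_1w_2w_3$, $v=w_1w_2$, $v'=w_2w_3$) gives an overlap ambiguity with $f=f_v$, $g=f_{v'}$. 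The inverse is read off the same formulas: an inclusion ambiguity $(w_1,w_2,w_3,f,g)$ returns $b=(w,(0,0),(|w_1|,|w_3|))$ with $w=\lm{f}$, and an overlap ambiguity returns $b=(w,(0,|w_3|),(|w_1|,0))$ with $w=w_1w_2w_3$; conditions i--iv are checked directly, condition iv translating back into $w_2\neq 1$. I would then verify that these two assignments are mutually inverse.

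For the second assertion I would first compute, in each configuration, the identity $\spol{\tilde b}=-\,\text{SP}(b)$. This is a one-line expansion: in the overlap case $\spol{\tilde b}=f_v w_3-w_1 f_{v'}=(w_1w_2-S(w_1w_2))w_3-w_1(w_2w_3-S(w_2w_3))$, the two copies of $w_1w_2w_3$ cancel and leave $-(S(w_1w_2)w_3-w_1S(w_2w_3))=-\text{SP}(b)$, and the inclusion case is identical using $w_1w_2w_3=w$. It then remains to notice that a $(S,w)$-type decomposition and a witness of solvability relative to $<$ are the same object in two notations: a term $\lambda_i w^1_i(w_i-S(w_i))w^2_i$ with $w_i\in\red{S}$ and $w^1_iw_iw^2_i<w$ is exactly a term $\lambda_i w^1_i f_{w_i} w^2_i$ with $f_{w_i}\in R$ and $w^1_i\lm{f_{w_i}}w^2_i<w$, via the bijection $\red{S}\to R$ of the first paragraph, and conversely. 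Hence $\text{SP}(b)$ admits a $(S,w)$-type decomposition if and only if it admits a solvability-type decomposition bounded by $w$; since $\spol{\tilde b}=-\text{SP}(b)$ and such decompositions are stable under negating the coefficients, this is equivalent to $\spol{\tilde b}$ admitting one, which is precisely the solvability of $\tilde b$ (note $w_1w_2w_3=w$). I expect the main obstacle to be purely organisational: keeping the length bookkeeping of the four integers $n,m,n',m'$ consistent across the overlap/inclusion split, and making sure the symmetry identification of branchings matches the fixed left/right roles of $f$ and $g$ in an ambiguity, so that the correspondence is genuinely one-to-one.
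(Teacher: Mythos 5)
Your proposal is correct and follows essentially the same route as the paper's proof: the same case split (inclusion ambiguity when one window is $(0,\,0)$, overlap ambiguity otherwise, normalised via the symmetry of branchings), the same explicit inverse construction with the same length bookkeeping, and the same identification of $(S,\ w)$-type decompositions with solvability witnesses through the bijection $\red{S}\to R$. The only divergence is the sign of the $S$-polynomial --- the paper orients $\text{SP}(b)$ so that it equals $\spol{\tilde{b}}$, while your convention gives $-\text{SP}(b)$ --- and you correctly neutralise this by observing that such decompositions are stable under negating the coefficients.
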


\begin{proof}

Let us show the first part of the lemma. Let $b=\left(w,\ (n,\ m),\ (n',\ m')\right)$ be a critical branching of $\left(X,\ <,\ S\right)$. In order to define $\tilde{b}$, we distinguish four cases depending on the values of $n$ and $m$:

\paragraph{Case 1:} $(n,\ m)\ =\ (0,\ 0)$. We write $w\ =\ w_1w_2w_3$, where the lengths of $w_1$ and $w_3$ are equal to $n'$ and $m'$, respectively. By definition of a critical branching, $w$ and $w_2$ belong to $\red{S}$ and we let $\tilde{b}\ =\ \Big(w_1,\ w_2,\ w_3,\ w-S(w),\ w_1\left(w_2-S(w_2)\right)w_3\Big)$. By definition of a critical branching, $n+n'+m+m'\ =\ n'+m'$ is strictly smaller than the length of $w$. In particular, $w_2$ is not the empty word, so that the tuple $\tilde{b}$ is an ambiguity of $R$ with respect to $<$ of the form~\ref{inclusion}.

\paragraph{Case 2:} $n\ =\ 0$ and $m\ \neq\ 0$. By definition of a critical branching, $m'\ =\ 0$. If $n'$ is also equal to $0$, we have $(n',\ m')\ =\ (0,\ 0)$, so that we exchange the roles of $(n,\ m)$ and $(n',\ m')$ and we recover the first case. If $n'\ \neq\ 0$, we write $w\ =\ w_1w_2w_3$, where the lengths of $w_1$ and $w_3$ are equal to $n'$ and $m$, respectively. In particular, $b$ being a critical branching, the monomials $w_1w_2$ and $w_2w_3$ belong to $\red{S}$ and $w_2$ is different from $1$. Hence, $\tilde{b}\ =\ \Big(w_1,\ w_2,\ w_3,\ w_1w_2-S(w_1w_2),\ w_2w_3-S(w_2w_3)\Big)$, is an ambiguity of $R$ with respect to $<$.

\paragraph{Case 3:} $n\ \neq\ 0$ and $m\ =\ 0$. By definition of a critical branching, $n'$ is equal to $0$. Exchanging the roles of $(n,\ m)$ and $(n',\ m')$, we recover the second case.

\paragraph{Case 4:} $n\ \neq\ 0$ and $m\ \neq\ 0$. By definition of a critical branching, the pair $(n',\ m')$ is equal to $(0,\ 0)$. Exchanging the roles of $(n,\ m)$ and $(n',\ m')$, we recover the first case.

We have a well-defined map $b\longmapsto\tilde{b}$ between critical branchings of $\left(X,\ <,\ S\right)$ and ambiguities of $R$ with respect to $<$. Now, we define the inverse map $\tilde{b}\longmapsto b$. Let $\tilde{b}\ =\ \left(w_1,\ w_2,\ w_3,\ f,\ g\right)$ be an ambiguity of $R$ with respect to $<$ and let $w\ =\ w_1w_2w_3$. 
\begin{itemize}
\item If $\tilde{b}$ is an ambiguity of the form~\ref{chevauchement}, let $n$ and $m'$ be the lengths of $w_1$ and $w_3$, respectively. The word $w_2$ being non-empty, $n+m'$ is strictly smaller than the length of $w$, so that $b\ =\ \left(w,\ (n,\ 0),\ (0,\ m')\right)$ is a critical branching of $\left(X,\ <,\ S\right)$.
\item If $\tilde{b}$ is of the form~\ref{inclusion}, let $n$ and $m$ be the lengths of $n$ and $m$, respectively. Then, $b\ =\ \left(w,\ (n,\ m),\ (0,\ 0)\right)$ is a critical branching of $\left(X,\ <,\ S\right)$.
\end{itemize}
Such defined, the two composites of $b\longmapsto\tilde{b}$ and $\tilde{b}\longmapsto b$ are identities.

Let us show the second part of the lemma. Given a critical branching $b$, $\spol{b}$ and $\spol{\tilde{b}}$ are equal. Letting $w$ the source of $w$, a $(S,\ w)$-type decomposition of $\spol{b}$ is precisely a decomposition of the from (\ref{S w decompo}). That shows the second part of the lemma.

\end{proof}

The Diamond Lemma for reduction operators is formulated as follows:

\begin{proposition}\label{Characterisation of confluent presentations}

The presentation $\left(X,\ <,\ S\right)$ is confluent if and only if for every critical branching b of source w, $\emph{SP}(b)$ admits a $\left(S,\ w\right)$-type decomposition.

\end{proposition}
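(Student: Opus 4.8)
The plan is to establish the proposition as a single chain of equivalences, assembling the two cited external results with both halves of Lemma~\ref{Lemma before characterisation of confluent presentations}. Before starting the chain, I would record one elementary fact: since $S$ is idempotent, the set $R$ of elements $w-S(w)$ with $w\in\red{S}$ spans $\ker(S)$, because the endomorphism $\id{\K{X^*}}-S$ is the projection onto $\ker(S)$ along $\im{S}$ and sends each monomial $w$ to $w-S(w)$, a vector that vanishes exactly when $w\in\nf{S}$. Hence $I(R)=I(\ker(S))$, which guarantees that the two-sided ideal occurring in Bergman's Diamond Lemma coincides with the one appearing in~\cite[Proposition 3.3.10]{MR3673007}.

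With this in hand, I would run the chain as follows. By~\cite[Proposition 3.3.10]{MR3673007}, the presentation $\left(X,\ <,\ S\right)$ is confluent if and only if $R$ is a noncommutative \G\ basis of $I(\ker(S))=I(R)$. By the Diamond Lemma~\cite[Theorem 1.2]{MR506890}, this holds if and only if every ambiguity of $R$ with respect to $<$ is solvable relative to $<$. The first part of Lemma~\ref{Lemma before characterisation of confluent presentations} supplies a bijection $b\longmapsto\tilde{b}$ between critical branchings of $\left(X,\ <,\ S\right)$ and ambiguities of $R$ with respect to $<$; since this bijection is surjective, quantifying over all ambiguities is the same as quantifying over all critical branchings, and the preceding condition becomes: for every critical branching $b$, the ambiguity $\tilde{b}$ is solvable relative to $<$. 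Finally, the second part of the same lemma asserts that, for a critical branching $b$ of source $w$, the ambiguity $\tilde{b}$ is solvable relative to $<$ if and only if $\text{SP}(b)$ admits a $\left(S,\ w\right)$-type decomposition; applying this equivalence inside the quantifier turns the condition into: for every critical branching $b$ of source $w$, $\text{SP}(b)$ admits a $\left(S,\ w\right)$-type decomposition. Concatenating these equivalences yields exactly the statement.

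Since each substantive step has already been discharged, either in the cited results or in the lemma, I expect the only delicate point to be bookkeeping. Specifically, I would take care that the universal quantifiers transport correctly across the bijection of the lemma, so that the condition ranging over all ambiguities of $R$ becomes faithfully the condition ranging over all critical branchings of $\left(X,\ <,\ S\right)$, and I would confirm that $I(R)$ coincides with $I(\ker(S))$ so that Bergman's formulation applies without modification. No computation beyond invoking these facts should be needed.
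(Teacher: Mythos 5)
Your proof is correct and takes essentially the same route as the paper: a chain of equivalences assembling \cite[Proposition 3.3.10]{MR3673007}, Bergman's Diamond Lemma, and both parts of Lemma~\ref{Lemma before characterisation of confluent presentations}. The only difference is that you explicitly justify the equality $I(R)=I\left(\ker(S)\right)$ (via the projection $\id{\K{X^*}}-S$), which the paper simply asserts.
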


\begin{proof}
The two-sided ideal $I(R)$ spanned by $R$ is equal to $I\left(\ker(S)\right)$. Hence, from~\cite[Proposition 3.3.10]{MR3673007}, $\left(X,\ <,\ S\right)$ is confluent if and only if $R$ is a noncommutative \G\ basis of $I(R)$. From the Diamond Lemma, the presentation $\left(X,\ <,\ S\right)$ is confluent if and only if every ambiguity of $R$ with respect to $<$ is solvable relative to $<$. Thus, from Lemma~\ref{Lemma before characterisation of confluent presentations}, $\left(X,\ <,\ S\right)$ is confluent if and only if for every critical branching $b$ of source $w$ the $S$-polynomial $\spol{b}$ admits a $\left(S,\ w\right)$-type decomposition.
\end{proof}

\begin{example}\label{Spol of braided monoid}

Considering the presentation of Example~\ref{Definition of the braided monoid}, we have one critical branching $b_1\ =\ \left(yzx,\ (1,\ 0),\ (0,\ 1)\right)$ and we have $\spol{b_1}\ =\ yxy\ -\ xx$. This $S$-polynomial does not admit a $(S,\ yzx)$-type decomposition so that we recover that the presentation is not confluent.

\end{example}

\section{Completion procedure}\label{proceduree de complétion en termes d'opérateurs de réduction}

In Section~\ref{Énoncé de l'proceduree de complétion}, we formulate our procedure for constructing confluent presentations by operators and we show the correctness of this procedure in Section~\ref{Correction de l'proceduree de complétion}. Throughout Section~\ref{proceduree de complétion en termes d'opérateurs de réduction}, we fix the following notations:

\begin{itemize}
\item[\textbf{i.}] \textbf{A} is an algebra and $\left(X,\ <,\ S\right)$ is a presentation by operator of \textbf{A}.
\item[\textbf{ii.}] Given a reduction operator $T\ \in\ \textbf{RO}\left(X^*,\ <\right)$ and a pair of integers $(n,\ m)$, the operator $T_{n,m}$ is defined such as the beginning of Section~\ref{Presentations by operator}.
\item[\textbf{iii.}] For every $f\ \in\ \K{X^*}$, we write $T(f)\ =\ \ker^{-1}\left(\K{f}\right)$. Explicitly, $\left(T(f)\right)(\lm{f})$ is equal to $\lm{f}\ -\ 1/\lc{f}f$ and all other monomial is a normal form for $T(f)$. Moreover, we write $\supp{f}$ the \emph{support} of $f$, that is the set monomials occurring in the decomposition of $f$ with a nonzero coefficient.
\item[\textbf{iv.}] Given a subset $E\ \subseteq\ \K{X^*}$, we write $\lm{E}$ the set of leading monomials of elements of $E$.
\end{itemize}

\subsection{Formulation}\label{Énoncé de l'proceduree de complétion}

Our procedure requires a function called {\tt normalisation} with inputs a finite set $E\ \subset\ \K{X^*}$ and a reduction operator $U\ \in\ \textbf{RO}\left(X^*,\ <\right)$ and with output a finite set of reduction operators. Then, {\tt normalisation}$\left(E,\ U\right)$ is defined as follows:
\begin{enumerate}
\item Let $M\ =\ \left(\bigcup_{f\in E}\supp{f}\right)\setminus\lm{E}$ and $F\ =\ \left\{T(f)\ \mid\ f\in E\right\}$.
\item {\tt while} $\exists\ w_1ww_2\ \in\ M$ such that $w\ \in\ \red{U}$, 
\begin{itemize}
\item[\textbf{i.}] we add $T\left(w_1(w-U(w))w_2\right)$ to $F$,
\item[\textbf{ii.}] we remove $w_1ww_2$ from $M$,
\item[\textbf{iii.}] we add $\supp{w_1U(w)w_2}$ to $M$.
\end{itemize}
\item {\tt normalisation}$\left(E,\ U\right)$ is the set $F$ obtained when the loop {\tt while} is over.
\end{enumerate}
The loop {\tt while} is terminating beacause $E$ is finite and $<$ is a monomial order.

We formulate our completion procedure. We assume that the presentation $\left(X,\ <,\ S\right)$ is \emph{finite}, that is $X$ is finite and $\ker(S)$ is finite-dimensional. In particular, the set of critical branchings of $\left(X,\ <,\ S\right)$ is finite.

\vspace{5cm}

\begin{algorithm}
 
\caption{Completion procedure} 
 
\textbf{Initialisation}: 
\begin{itemize}
\item $d\ :=\ 0$,
\item $S^d\ :=\ S$,
\item $Q_d\ :=\ \emptyset$ and $P_d\ :=\ \Big\{\text{critical branchings of}\ \left(X,\ <,\ S^d\right)\Big\}$,
\item $E_d\ :=\ \Big\{w\ -\ {S^{d}}_{n,m}(w)\ \mid\ (w,\ (n,\ m),\ (n',\ m'))\ \in\ P_d\Big\}$.
\end{itemize}

\vspace{3mm}

\begin{algorithmic}[1]

\WHILE{$Q_d\ \neq\ P_d$}

\vspace{2mm}

\STATE $F_d\ :=\ {\tt normalisation}(E_d,\ S^{d})$;

\vspace{2mm}

\STATE $S^{d+1}\ :=\ S^{d}\wedge C^{F_{d}}$;

\vspace{2mm}

\STATE $Q_{d+1}\ :=\ P_{d}$;

\vspace{2mm}

\STATE $d\ =\ d+1$;

\vspace{2mm}

\STATE $P_{d}\ :=\ \Big\{\text{critical branchings of}\ \left(X,\ <,\ S^d\right)\Big\}$;

\vspace{2mm}

\STATE $E_d\ :=\ \Big\{w\ -\ {S^{d}}_{n,m}(w)\ \mid\ (w,\ (n,\ m),\ (n',\ m'))\in P_d\ \setminus\ Q_d\Big\}$;

\vspace{2mm}

\ENDWHILE

\end{algorithmic}
 
\end{algorithm}

This first and the last instruction of the loop \textbf{while} make sense because we have the following:

\begin{lemma}\label{Remarks procedure}
  
Let $d$ be an integer.
\begin{enumerate}
\item\label{Aspect effectif de l'algo de complétion} The kernels of $S^d$ and $C^{F_d}$ are finite-dimensional.
\item\label{Croissance des paires critiques} The set $Q_d$ is included in $P_d$.
\end{enumerate}

\end{lemma}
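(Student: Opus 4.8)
The plan is to establish both assertions simultaneously by induction on $d$, the point being that the finite-dimensionality in~\ref{Aspect effectif de l'algo de complétion} is exactly what guarantees that every finite object manipulated in the loop (the set $P_d$ of critical branchings, the set $E_d$, and the family $F_d$) is genuinely finite, so that the induction can proceed and so that assertion~\ref{Croissance des paires critiques} even makes sense.

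For~\ref{Aspect effectif de l'algo de complétion}, I would take as induction hypothesis that $\ker(S^d)$ is finite-dimensional; the base case holds since $\ker(S^0)=\ker(S)$ is finite-dimensional by the finiteness assumption on the presentation. First I would note that $\red{S^d}$ is finite, its cardinality being equal to $\dim\ker(S^d)$. Since, by the case analysis of Lemma~\ref{Lemma before characterisation of confluent presentations}, a critical branching of $\left(X,\ <,\ S^d\right)$ has a source obtained either from a single element of $\red{S^d}$ (inclusion case) or from the overlap of two elements of $\red{S^d}$ (overlap case), and since $X$ is finite, the set $P_d$ is finite; hence so is $E_d$, and so is $F_d={\tt normalisation}(E_d,\ S^d)$, because the {\tt while} loop terminates and adds one operator per iteration to a finite initial family. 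Next I would observe that every operator occurring in $F_d$ is of the form $T(f)$ for some $f\in\K{X^*}$ — this is clear for the initial operators and for each operator $T\left(w_1(w-S^d(w))w_2\right)$ added inside the loop — so that $\ker(T(f))=\K{f}$ has dimension at most one. Consequently $\ker(\wedge F_d)=\sum_{T\in F_d}\ker(T)$ is finite-dimensional. By Definition~\ref{definition of complement} and the lattice identities, $\ker(C^{F_d})=\ker(\wedge F_d)\cap\K{\nf{F_d}}$ is contained in $\ker(\wedge F_d)$ and is therefore finite-dimensional, which proves the $C^{F_d}$ part. Finally $\ker(S^{d+1})=\ker(S^d)+\ker(C^{F_d})$ is finite-dimensional, closing the induction.

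For~\ref{Croissance des paires critiques}, I would again induct on $d$, the base case $Q_0=\emptyset\subseteq P_0$ being trivial. For $d\geq 1$ the loop sets $Q_d=P_{d-1}$, so it suffices to show $P_{d-1}\subseteq P_d$. From $\ker(S^d)=\ker(S^{d-1})+\ker(C^{F_{d-1}})\supseteq\ker(S^{d-1})$ I get $S^d\preceq S^{d-1}$, whence $\nf{S^d}\subseteq\nf{S^{d-1}}$ by~(\ref{Inclusion of images}), that is $\red{S^{d-1}}\subseteq\red{S^d}$. Since $w$ lies in $\red{T_{n,m}}$ precisely when its middle factor of length $|w|-n-m$ is $T$-reducible, this inclusion lifts to $\red{S^{d-1}_{n,m}}\subseteq\red{S^d_{n,m}}$ for every $(n,\ m)$. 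As conditions ii--iv defining a critical branching do not involve the operator and condition~i is preserved under this inclusion, every critical branching of $\left(X,\ <,\ S^{d-1}\right)$ is a critical branching of $\left(X,\ <,\ S^d\right)$, i.e.\ $P_{d-1}\subseteq P_d$, so that $Q_d\subseteq P_d$.

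The step I expect to demand the most care is the finiteness of $P_d$: one has to check both that a finite-dimensional kernel yields a finite reducible set and that finitely many reducible monomials over a finite alphabet produce only finitely many critical branchings. The remaining ingredients — the lattice formulas $\ker(T_1\wedge T_2)=\ker(T_1)+\ker(T_2)$ and $\ker(T_1\vee T_2)=\ker(T_1)\cap\ker(T_2)$, together with the one-dimensionality of the kernels of the operators produced by {\tt normalisation} — are routine.
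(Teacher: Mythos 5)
Your proof is correct and follows the same skeleton as the paper's: induction on $d$ for Point~\ref{Aspect effectif de l'algo de complétion}, reducing to $\ker\left(S^{d+1}\right)=\ker\left(S^d\right)+\ker\left(C^{F_d}\right)$, and for Point~\ref{Croissance des paires critiques} the direct argument $Q_d=P_{d-1}$, $S^d\preceq S^{d-1}$, implication~(\ref{Inclusion of images}) giving $\red{S^{d-1}}\subseteq\red{S^d}$, plus the observation that the remaining conditions of Definition~\ref{Critical Branchings} do not depend on the operator. The one substantive local difference is how finite-dimensionality of $\ker\left(C^{F_d}\right)$ is obtained: the paper bounds it by the monomial subspace $\K{M_d}$, where $M_d=\bigcup_{f\in E_d}\supp{f}$ is finite, whereas you bound it by $\ker\left(\wedge F_d\right)=\sum_{T\in F_d}\ker\left(T\right)$, using that every element of $F_d$ has the form $T(f)$ with $\ker\left(T(f)\right)=\K{f}$ of dimension at most one, and that $\ker\left(C^{F_d}\right)=\ker\left(\wedge F_d\right)\cap\K{\nf{F_d}}$ by the lattice formulas. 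Both bounds are one-liners resting on the same prerequisite, namely finiteness of $E_d$ and $F_d$, which you (like the paper) trace back to the finiteness of the set of critical branchings of a presentation with finite-dimensional kernel; your write-up has the merit of actually sketching that finiteness (finitely many overlap and inclusion sources built from the finite set $\red{S^d}$), which the paper takes for granted --- note that finiteness of $X$ is not in fact needed there, only finiteness of $\red{S^d}$. Finally, your ``induction'' in Point~\ref{Croissance des paires critiques} is vacuous, since the step never invokes the inductive hypothesis; it is cleaner to state it as a direct argument, as the paper does.
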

\begin{proof}

We show Point~\ref{Aspect effectif de l'algo de complétion} by induction on $d$. The kernel of $S^0\ =\ S$ is finite-dimensional by hypotheses. Let $d\ \in\ \N$ and assume that the kernel of $S^d$ is finite-dimensional. Let $M_d\ =\ \bigcup_{f\in E_d}\supp{f}$ be the union of words appearing in $E_d$. The elements of $F_d$ are only acting on $M_d$, so that we have the inclusion
\begin{equation}\label{relation des M_d}
\ker\left(C^{F_d}\right)\ \subset\ \K{M_d}.
\end{equation}
The kernel of $S^d$ being finite-dimensional by induction hypothesis, the set of critical branchings of $\left(X,\ <,\ S^d\right)$ is finite. Hence, $E_d$ and $M_d$ are finite sets, so that $\ker\left(C^{F_d}\right)$ is finite-dimensional from (\ref{relation des M_d}). Moreover, by definition of  $\wedge$, $\ker\left(S^{d+1}\right)$ is equal to $\ker\left(S^d\right)\ +\ \ker\left(C^{F_d}\right)$, so that $\ker\left(S^{d+1}\right)$ is finite-dimensional.

Let us show Point~\ref{Croissance des paires critiques}. By construction, $Q_d$ is equal to $P_{d-1}$, that is $Q_d$ is the set of critical branchings of $\left(X,\ <,\ S^{d-1}\right)$. Let $\left(w,\ (n,\ m),\ (n',\ m')\right)$ be such a critical branching, so that we have
\begin{equation}\label{first relation}
w\ \in\ \red{\left(S^{d-1}\right)_{n,m}}\cap\red{\left(S^{d-1}\right)_{n',m'}}.
\end{equation}
Moreover, by construction, we have $S^{d}\ \preceq\ S^{d-1}$. Hence, from implication (\ref{Inclusion of images}) (see page~\pageref{Inclusion of images}), we have
\begin{equation}\label{second relation}
\red{S^{d-1}}\ \subset\ \red{S^{d}}.
\end{equation}
From (\ref{first relation}) and (\ref{second relation}), $w$ belongs to $\red{{S^{d}}_{n,m}}\cap\red{{S^{d}}_{n',m'}}$, so that $\left(w,\ (n,\ m),\ (n',\ m')\right)$ is a critical branching of $\left(X,\ <,\ S^{d+1}\right)$, that is it belongs to $P_d$. Thus, $Q_d$ is included in $P_d$.

\end{proof}

\begin{remark}\label{effectiveness} 

Our procedure requires to compute lower-bound of reduction operators relative to $\left(X^*,\ <\right)$. In Section~\ref{Exemples annexes}, we give the implementation of $\ker^{-1}$ for totally ordered finite sets, so that it cannot be used for a set of monomials. However, from Lemma~\ref{Remarks procedure}, the kernels of $S^d$ and $C^{F_d}$ are finite-dimensional, so that these two operators can be computed by restrictions over finite-dimensional subspaces of $\K{X^*}$. We illustrate how works such computations in Section~\ref{Exemples annexes}.

\end{remark}

Our procedure has no reason to terminate since there exist finitely presented algebras with no finite \G\ basis~\cite[Section 1.3]{MR1299371}. If the procedure terminates after $d$ iterations of the loop \textbf{while}, we let $S^n\ =\ S^d$ for every integer $n\  \geq\ d$, so that the sequence $\left(S^d\right)_{d\in\N}$ is well-defined if the procedure terminates or not. We let
\[\overline{S}=\bigwedge_{d\in\N}S^d.\]

\begin{definition}

 The triple $\left(X,\ <,\ \overline{S}\right)$ is called the \emph{completed presentation} of $\left(X,\ <,\ S\right)$.
  
\end{definition}

The purpose of the next section is to show that the completed presentation of $\left(X,\ <,\ S\right)$ is a confluent presentation of \textbf{A}, that is our procedure computes a noncommutative \G\ basis.

\subsection{Soundness}\label{Correction de l'proceduree de complétion}

In this section, we say reduction operator instead of reduction operator relative to $\left(X^*,\ <\right)$.

\begin{lemma}\label{Lemme principal sur les S-pol}

Let $w\ \in\ X^*$ and let T and $T'$ be two reduction operators such that $T'\ \preceq\ T$.
\begin{enumerate}
\item\label{n et m fixés, d change} Let $(n,\ m)$ be a pair of integers such that $w$ is $T_{n,m}$-reducible. Then, $\left(T_{n,m}\ -\ T'_{n,m}\right)(w)$ admits a $(T',\ w)$-type decomposition.
\item\label{n et m changent, d fixé} Let $f\ \in\ \K{X^*}$ admitting a $(T,\ w)$-type decomposition. Then, f admits a $(T',\ w)$-type decomposition.
\end{enumerate}

\end{lemma}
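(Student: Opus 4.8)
The plan is to prove Point~\ref{n et m fixés, d change} first and then deduce Point~\ref{n et m changent, d fixé} from it by applying the first point with $(n,\ m)=(0,\ 0)$. Throughout I would exploit two consequences of the hypothesis $T'\preceq T$, i.e. $\ker(T)\subseteq\ker(T')$. First, implication (\ref{Inclusion of images}) gives $\nf{T'}\subseteq\nf{T}$, hence $\red{T}\subseteq\red{T'}$: every $T$-reducible monomial is $T'$-reducible. Second, for any monomial $u$ the element $u-T(u)$ lies in $\ker(T)$ by idempotency of $T$, hence in $\ker(T')$; applying $T'$ yields the identity $T'(u)=T'(T(u))$, which is the algebraic heart of the argument.

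For Point~\ref{n et m fixés, d change}, since $w$ is $T_{n,m}$-reducible I would write $w=w_1w_2w_3$ with $w_1$ and $w_3$ of lengths $n$ and $m$ and $w_2\in\red{T}$, so that $\left(T_{n,m}-T'_{n,m}\right)(w)=w_1\big(T(w_2)-T'(w_2)\big)w_3$. Setting $v:=T(w_2)$, the identity $T'(w_2)=T'(T(w_2))=T'(v)$ turns this into $w_1\big(v-T'(v)\big)w_3$. Because $w_2$ is $T$-reducible, $v=T(w_2)$ is a linear combination $\sum_j\mu_ju_j$ of $T$-normal forms $u_j$ with $u_j\leq\lm{v}<w_2$. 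Then $v-T'(v)=\sum_j\mu_j\big(u_j-T'(u_j)\big)$; discarding the indices for which $u_j\in\nf{T'}$ (those terms vanish) leaves only $T'$-reducible $u_j$, and the compatibility of $<$ with multiplication gives $w_1u_jw_3<w_1w_2w_3=w$. This is exactly a $(T',\ w)$-type decomposition of $\left(T_{n,m}-T'_{n,m}\right)(w)$. The case $(n,\ m)=(0,\ 0)$ is included by taking $w_1=w_3=1$.

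For Point~\ref{n et m changent, d fixé}, starting from a decomposition $f=\sum_i\lambda_iw^1_i\big(w_i-T(w_i)\big)w^2_i$ with $w_i\in\red{T}$ and $w^1_iw_iw^2_i<w$, I would split each factor as $w_i-T(w_i)=\big(w_i-T'(w_i)\big)-\big(T(w_i)-T'(w_i)\big)$. For the first summand, $w_i\in\red{T}\subseteq\red{T'}$ and $w^1_iw_iw^2_i<w$, so $\lambda_iw^1_i\big(w_i-T'(w_i)\big)w^2_i$ is already a legal $(T',\ w)$-type term. For the second summand, applying Point~\ref{n et m fixés, d change} with $(n,\ m)=(0,\ 0)$ to $w_i$ (which is $T$-reducible) gives a $(T',\ w_i)$-type decomposition of $T(w_i)-T'(w_i)$; multiplying it on the left by $w^1_i$ and on the right by $w^2_i$ and using $w^1_iw_iw^2_i<w$ together with the multiplicativity of $<$ keeps every resulting monomial strictly below $w$. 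Summing over $i$ and collecting like terms (discarding any that cancel) produces a $(T',\ w)$-type decomposition of $f$.

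The only genuinely delicate step is the passage $T(w_2)-T'(w_2)=v-T'(v)$ in Point~\ref{n et m fixés, d change}: it is what converts a difference of two distinct reductions into a single residual of the vector $v=T(w_2)$, which then splits termwise along $T'$. This is precisely where the inclusion $\ker(T)\subseteq\ker(T')$ is indispensable; everything else is bookkeeping with the monomial order and the observation that the support of $T(w_2)$ consists of monomials strictly smaller than $w_2$.
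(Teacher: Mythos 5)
Your proof is correct and takes essentially the same route as the paper's: both points hinge on the identity $T'\circ T\ =\ T'$ (coming from $\ker(T)\ \subseteq\ \ker(T')$), followed by expanding $T(w_2)$ in the monomial basis and splitting termwise, and your Point 2 uses the same decomposition $w_i-T(w_i)\ =\ \left(w_i-T'(w_i)\right)-\left(T(w_i)-T'(w_i)\right)$ before reducing to Point 1. The only cosmetic differences are that the paper invokes Point 1 at $(n_i,\ m_i)$ applied directly to $w^1_iw_iw^2_i$ where you invoke it at $(0,\ 0)$ and multiply by $w^1_i$, $w^2_i$ afterwards, and that you are slightly more scrupulous than the paper in discarding the vanishing terms with $u_j\ \in\ \nf{T'}$ and in noting $\red{T}\ \subseteq\ \red{T'}$.
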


\begin{proof}

Let us show Point~\ref{n et m fixés, d change}. We let $w\ =\ w^{(n)}w'w^{(m)}$, where $w^{(n)}$ and $w^{(m)}$ have length $n$ and $m$, respectively. Let
\begin{equation}\label{T(w')}
T(w')\ =\ \sum_{i=1}^k\lambda_iw_i,
\end{equation}
be the decomposition of $T(w')$ with respect to the basis $X^*$. By hypotheses, $T'$ is smaller than $T$, that is $\ker\left(T\right)\ \subseteq\ \ker\left(T'\right)$, so that $T'\circ T$ is equal to $T'$. Hence, we have
\[\begin{split}
\left(T_{n,m}-T'_{n,m}\right)(w)\ &=\ w^{(n)}\left(T(w')-T'(w')\right)w^{(m)}\\
&=\ w^{(n)}\left(T(w')-T'\left(T(w')\right)\right)w^{(m)}.
\end{split}\]
From (\ref{T(w')}), we obtain
\begin{equation}\label{type decompo of the Spol}
\left(T_{n,m}\ -\ T'_{n,m}\right)(w)\ =\ \sum_{i=1}^k\lambda_i w^{(n)}\left(w_i-T'(w_i)\right)w^{(m)}.
\end{equation}
By hypotheses, $w$ is $T_{n,m}$-reducible, so that $w'$ is $T$-reducible and each $w_i$ is strictly smaller than $w'$ for $<$. The strict order $<$ being monomial, each $w^{(n)}w_iw^{(m)}$ is strictly smaller than $w^{(n)}w'w^{(m)}\ =\ w$, so that (\ref{type decompo of the Spol}) is a $(T',\ w)$-type decomposition of  $\left(T_{n,m}\ -\ T'_{n,m}\right)(w)$.

Let us show Point~\ref{n et m changent, d fixé}. Let
\begin{equation}\label{Petite decompo}
f\ =\ \sum_{i=1}^n\lambda_i w^1_i\left(w_i-T(w_i)\right)w^2_i,
\end{equation}
be a $(T,\ w)$-type decomposition of $f$. Letting
\[A\ =\ \sum_{i=1}^n\lambda_i w^1_i\left(w_i-T'(w_i)\right)w^2_i\ \ \text{and}\ \ B\ =\ \sum_{i=1}^n\lambda_i w^1_i\left(T(w_i)-T'(w_i)\right)w^2_i,\]
$f$ is equal to $A\ -\ B$. The decomposition (\ref{Petite decompo}) being $(T,\ w)$-type, each $w'_i\ =\ w^1_iw_iw^2_i$ is strictly smaller than $w$, so that $A$ is $(T',\ w)$-type. For every $i\ \in\ \{1,\ \cdots,\ n\}$, let $n_i$ and $m_i$ be the lengths of $w^1_i$ and $w^2_i$, respectively, so that we have $B\ =\ \sum_{i=1}^n\lambda_i\left(T_{n_i,m_i}-T'_{n_i,m_i}\right)(w'_i)$. Each $w_i$ being $T$-reducible, each $w'_i$ is $T_{n_i,m_i}$-reducible. Hence, from Point~\ref{n et m fixés, d change} of the lemma, each $\left(T_{n_i,m_i}\ -\ T'_{n_i,m_i}\right)(w'_i)$ admits a $(T',\ w'_i)$-type decomposition, so that it admits a $(T',\ w)$-type decomposition since $w'_i$ is strictly smaller than $w$. Hence, $B$ admits a $(T',\ w)$-type decomposition, so that $f$ also admits such a decomposition.

\end{proof}

\paragraph{Notation.}

For every integer $d$, let $F_d$ be the reduction family of $\left(X,\ <,\ S^d\right)$, that is $F_d$ is equal to $ \Big\{\left(S^d\right)_{n,m}\ \mid\ (n,\ m)\ \in\ \N^2\Big\}$.

\begin{lemma}\label{Lemme sur les S-pol}

Let d be an integer, let $(w,\ (n,\ m),\ (n',\ m'))\ \in\ P_d\setminus Q_d$ and let f be the S-polynomial of $(w,\ (n,\ m),(n',\ m'))$.
\begin{enumerate}
\item\label{the S-polynomial belongs to the kernel of the lower bound} $\left(\wedge F_d\right)(f)$ is equal to $0$.
\item\label{the S-polynomial admits an upper decomposition} f admits a $\left(S^{d+1},\ w\right)$-type decomposition.
\end{enumerate}

\end{lemma}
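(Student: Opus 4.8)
The plan is to establish the two points separately, using the identification $\ker\left(\wedge F_d\right)\ =\ \sum_{(n,m)}\ker\left(\left(S^d\right)_{n,m}\right)$ and the relation $S^{d+1}\ \preceq\ S^d$, together with Lemma~\ref{Lemme principal sur les S-pol}. For Point~\ref{the S-polynomial belongs to the kernel of the lower bound} I would write $f$ as a difference of two reduction tails,
\[f\ =\ \Big(w-\left(S^d\right)_{n',m'}(w)\Big)\ -\ \Big(w-\left(S^d\right)_{n,m}(w)\Big).\]
Idempotency of the two extensions gives $w-\left(S^d\right)_{n,m}(w)\in\ker\left(\left(S^d\right)_{n,m}\right)$ and $w-\left(S^d\right)_{n',m'}(w)\in\ker\left(\left(S^d\right)_{n',m'}\right)$, and since the kernel of the lower bound $\wedge F_d$ is the sum of the kernels of the operators $\left(S^d\right)_{n,m}$, both terms — hence $f$ — lie in $\ker\left(\wedge F_d\right)$, which is exactly $\left(\wedge F_d\right)(f)=0$.

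For Point~\ref{the S-polynomial admits an upper decomposition} I may assume $f\neq 0$; as $w$ is reducible under both $\left(S^d\right)_{n,m}$ and $\left(S^d\right)_{n',m'}$, both values $\left(S^d\right)_{n,m}(w)$ and $\left(S^d\right)_{n',m'}(w)$ are strictly smaller than $w$, so $\lm{f}<w$. I would then interpolate the finer operator $S^{d+1}$:
\[\begin{split}
f\ =\ &\Big(\left(S^d\right)_{n,m}(w)-\left(S^{d+1}\right)_{n,m}(w)\Big)\ -\ \Big(\left(S^d\right)_{n',m'}(w)-\left(S^{d+1}\right)_{n',m'}(w)\Big)\\
&+\ \Big(\left(S^{d+1}\right)_{n,m}(w)-\left(S^{d+1}\right)_{n',m'}(w)\Big).
\end{split}\]
Since $S^{d+1}\preceq S^d$ and $w$ is both $\left(S^d\right)_{n,m}$- and $\left(S^d\right)_{n',m'}$-reducible, Lemma~\ref{Lemme principal sur les S-pol}.\ref{n et m fixés, d change} shows that the first two bracketed terms each admit a $\left(S^{d+1},\ w\right)$-type decomposition. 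It therefore remains to find such a decomposition for the last term $C\ =\ \left(S^{d+1}\right)_{n,m}(w)-\left(S^{d+1}\right)_{n',m'}(w)$, the $S$-polynomial of the same branching computed with $S^{d+1}$.

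The residual term $C$ is where the completion step must be used, and it is the main obstacle. Here the hypothesis $(w,\ (n,\ m),\ (n',\ m'))\in P_d\setminus Q_d$ is decisive: it guarantees that $w-\left(S^d\right)_{n,m}(w)$ belongs to $E_d$, so that {\tt normalisation}$\left(E_d,\ S^d\right)$ reduces it, producing operators whose kernels are spanned by relations $w^1\left(u-S^d(u)\right)w^2$ arising while rewriting the support of $\left(S^d\right)_{n,m}(w)$; because $\left(S^d\right)_{n,m}(w)<w$ and the loop only decreases monomials, all these relations have leading monomial strictly below $w$. The operator used to form $S^{d+1}=S^d\wedge C^{F_d}$ is the complement of that family (Definition~\ref{definition of complement} and~\cite{MR3673007}), so the family together with $C^{F_d}$ is confluent; I would use its Church--Rosser property to resolve $C$ through these bounded relations and then convert the resulting $\left(S^d,\ w\right)$-type pieces into a $\left(S^{d+1},\ w\right)$-type decomposition via Lemma~\ref{Lemme principal sur les S-pol}.\ref{n et m changent, d fixé}. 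The delicate point throughout is the bookkeeping of leading monomials: one must verify that every rule entering the resolution of $C$ acts on a monomial strictly below $w$, so that the strict bound $w^1_iw_iw^2_i<w$ demanded by a $\left(S^{d+1},\ w\right)$-type decomposition is never violated. This is precisely what rules out a naive recursion on $C$ (which, viewed at level $d+1$, is an \emph{old} branching lying in $Q_{d+1}$ and hence outside the scope of Point~\ref{the S-polynomial admits an upper decomposition}) and forces one to exploit the explicit, monomial-decreasing normalisation of $E_d$.
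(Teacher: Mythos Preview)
Your argument has a genuine gap in Point~\ref{the S-polynomial admits an upper decomposition}, and it stems partly from a confusion about which family $F_d$ denotes. Despite the Notation paragraph preceding the lemma, the $F_d$ at work here is the finite set produced by {\tt normalisation}$(E_d,S^d)$ in the procedure, not the reduction family $\{(S^d)_{n,m}:(n,m)\in\N^2\}$: it is \emph{this} $F_d$ whose complement $C^{F_d}$ defines $S^{d+1}=S^d\wedge C^{F_d}$. For that $F_d$ your proof of Point~\ref{the S-polynomial belongs to the kernel of the lower bound} via the identification $\ker(\wedge F_d)=\sum_{(n,m)}\ker\left((S^d)_{n,m}\right)$ is not valid; the paper instead uses the hypothesis that the branching lies in $P_d\setminus Q_d$, so that both $w-(S^d)_{n,m}(w)$ and $w-(S^d)_{n',m'}(w)$ belong to $E_d$, whence the operators $T\!\left(w-(S^d)_{n,m}(w)\right)$ and $T\!\left(w-(S^d)_{n',m'}(w)\right)$ lie in $F_d$ and $f\in\ker(\wedge F_d)$.

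For Point~\ref{the S-polynomial admits an upper decomposition}, your splitting $f=A-B+C$ is a detour that does not close. To handle $C=(S^{d+1})_{n,m}(w)-(S^{d+1})_{n',m'}(w)$ via the Church--Rosser property of $F_d\cup\{C^{F_d}\}$ you would need $C\in\ker\!\left(\wedge\!\left(F_d\cup\{C^{F_d}\}\right)\right)=\ker(\wedge F_d)$, and nothing gives this: $A$ and $B$ involve $S^{d+1}$, and there is no reason they lie in the span of the finitely many specific relations produced by the normalisation loop, so from $f\in\ker(\wedge F_d)$ one cannot infer $C\in\ker(\wedge F_d)$. In effect you have reduced the problem to the $S$-polynomial of the same branching computed at level $d{+}1$, which is no easier. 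The paper avoids this entirely by applying Church--Rosser of $F_d\cup\{C^{F_d}\}$ \emph{directly to $f$}: Point~\ref{the S-polynomial belongs to the kernel of the lower bound} yields $T_1,\dots,T_r\in F_d\cup\{C^{F_d}\}$ with $(T_r\circ\cdots\circ T_1)(f)=0$; telescoping gives $f=\sum_k\left(\id{\K{X^*}}-T_k\right)\!\left(T_{k-1}\circ\cdots\circ T_1(f)\right)$, where each summand is a single $(S^d,w)$- or $(C^{F_d},w)$-type term because $\lm{f}<w$ and reduction operators never raise leading monomials; Lemma~\ref{Lemme principal sur les S-pol}.\ref{n et m changent, d fixé}, using $S^{d+1}\preceq S^d$ and $S^{d+1}\preceq C^{F_d}$, then converts each piece into a $(S^{d+1},w)$-type decomposition. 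The $A-B+C$ manoeuvre you wrote is the device of Proposition~\ref{type composition pour Sd}, not of this lemma.
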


\begin{proof}

  Let us show Point~\ref{the S-polynomial belongs to the kernel of the lower bound}. The two elements $w\ -\ \left(S^{d}\right)_{n,m}(w)$ and $w\ -\ \left(S^{d}\right)_{n',m'}(w)$ belong to $E_d$ by construction of the latter. Hence, by definition of the function {\tt normalisation}, the operators $T_1\ =\ T\left(w-\left(S^{d}\right)_{n,m}(w)\right)$ and $T_2\ =\ T\left(w-\left(S^{d}\right)_{n',m'}(w)\right)$ belong to $F_d$, so that $f\ =\ (w-{S^{d}}_{n,m}(w))-(w-{S^{d}}_{n',m'}(w))$ belongs to the kernel of $T_1\wedge T_2$. The latter is included in the kernel of $\wedge F_d$, which shows Point~\ref{the S-polynomial belongs to the kernel of the lower bound}.

Let us show Point~\ref{the S-polynomial admits an upper decomposition}. The operator $C^{F_{d}}$ being a complement of $F_{d}$, we have
\begin{equation}\label{Formule du complément dans Buchberger}
\wedge\left(F_{d}\cup\left\{C^{F_{d}}\right\}\right)\ =\ \wedge F_{d},
\end{equation}
and $F_{d}\cup\left\{C^{F_{d}}\right\}$ is confluent (see the paragraph after Definition~\ref{definition of complement}), that is it has the Church-Rosser property (see the paragraph before Definition~\ref{definition of complement}). Hence, from Point~\ref{the S-polynomial belongs to the kernel of the lower bound} of the lemma and Relation (\ref{Formule du complément dans Buchberger}), there exist $T_1,\ \cdots,\ T_r\ \in\ F_{d}\cup\left\{C^{F_{d}}\right\}$ such that
\begin{equation}\label{Réécriture de f et 0}
\left(T_r\circ\cdots\circ T_1\right)\left(f\right)\ =\ 0.
\end{equation}
We let $f_1\ =\ \left(\id{\K{X^*}}\ -\ T_1\right)(f)$ and for every $k\ \in\ \{2,\ \cdots,\ r\}$, $f_k\ =\ \left(\id{\K{X^*}}\ -\ T_k\right)\left(T_{k-1}\circ\cdots\circ T_1(f)\right)$. From (\ref{Réécriture de f et 0}), we have
\begin{equation}\label{Décompo de f en fi}
f\ =\ \sum_{k=1}^{r}f_k.
\end{equation}
The tuple $(w,\ (n,\ m),\ (n',\ m'))$ being a critical branching of $\left(X,\ <,\ S^d\right)$, $w$ belongs to $\red{\left(S^d\right)_{n,m}}\cap\red{\left(S^d\right)_{n',m'}}$, so that the leading monomial of $f$ is strictly smaller than $w$. Moreover, each $T_i$ is either of the form $T\left(w_1(w_2\ -\ S^d(w_2))w_3\right)$, or is equal to $C^{F_d}$. Hence, each $f_i$ admits a  $\left(S^d,\ w\right)$-type decomposition or a $\left(C^{F_d},\ w\right)$-type decomposition. The reduction operators $S^d$ and $C^{F_d}$ being smaller than $S^{d+1}$, each $f_i$ admits a $\left(S^{d+1},\ w\right)$-type decomposition from Point~\ref{n et m changent, d fixé} of Lemma~\ref{Lemme principal sur les S-pol}, so that $f$ admits a $\left(S^{d+1},\ w\right)$-type decomposition from (\ref{Décompo de f en fi}).

\end{proof}

\begin{proposition}\label{type composition pour Sd}

Let d be an integer. For every $\left(w,\ (n,\ m),\ (n',\ m')\right)\ \in\ Q_d$, the $S$-polynomial $\left(S^d\right)_{n,m}(w)\ -\ \left(S^d\right)_{n',m'}(w)$ admits a $\left(S^d,\ w\right)$-type decomposition.

\end{proposition}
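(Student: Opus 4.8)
The plan is to argue by induction on $d$. The base case $d=0$ is immediate, since $Q_0=\emptyset$ makes the statement vacuous. For the inductive step I assume the proposition at step $d-1$ and prove it at step $d$. The point to keep in mind throughout is that $Q_d=P_{d-1}$, so a branching $b=(w,\ (n,\ m),\ (n',\ m'))\in Q_d$ is a critical branching of $\left(X,\ <,\ S^{d-1}\right)$; moreover, by Point~\ref{Croissance des paires critiques} of Lemma~\ref{Remarks procedure}, $b$ also lies in $P_d$, so $w$ is both $\left(S^d\right)_{n,m}$-reducible and $\left(S^{d-1}\right)_{n,m}$-reducible (and likewise for $(n',\ m')$). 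Hence the two $S$-polynomials attached to $b$ relative to $S^d$ and to $S^{d-1}$ are both well-defined.

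The key algebraic observation is that these two $S$-polynomials differ only by controllable error terms. Writing $g=\left(S^d\right)_{n,m}(w)-\left(S^d\right)_{n',m'}(w)$ for the $S$-polynomial relative to $S^d$ and $f=\left(S^{d-1}\right)_{n,m}(w)-\left(S^{d-1}\right)_{n',m'}(w)$ for the one relative to $S^{d-1}$, a direct computation gives
\[g\ =\ f\ -\ \big(\left(S^{d-1}\right)_{n,m}-\left(S^d\right)_{n,m}\big)(w)\ +\ \big(\left(S^{d-1}\right)_{n',m'}-\left(S^d\right)_{n',m'}\big)(w).\]
Since $S^d=S^{d-1}\wedge C^{F_{d-1}}\preceq S^{d-1}$ and $w$ is $\left(S^{d-1}\right)_{n,m}$-reducible, Point~\ref{n et m fixés, d change} of Lemma~\ref{Lemme principal sur les S-pol} shows that each of the two error terms admits an $\left(S^d,\ w\right)$-type decomposition. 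As the elements admitting an $\left(S^d,\ w\right)$-type decomposition are closed under linear combination, it then suffices to prove that $f$ itself admits an $\left(S^d,\ w\right)$-type decomposition.

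To handle $f$, I split $Q_d=P_{d-1}=\left(P_{d-1}\setminus Q_{d-1}\right)\cup Q_{d-1}$. If $b\in P_{d-1}\setminus Q_{d-1}$, then Point~\ref{the S-polynomial admits an upper decomposition} of Lemma~\ref{Lemme sur les S-pol} applied at step $d-1$ gives directly that $f$ admits an $\left(S^d,\ w\right)$-type decomposition. If instead $b\in Q_{d-1}$, the induction hypothesis yields an $\left(S^{d-1},\ w\right)$-type decomposition of $f$, which Point~\ref{n et m changent, d fixé} of Lemma~\ref{Lemme principal sur les S-pol}, using $S^d\preceq S^{d-1}$, upgrades into an $\left(S^d,\ w\right)$-type decomposition. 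In either case $f$ has the required decomposition, and combining it with the error terms through the displayed identity finishes the inductive step.

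The main obstacle is conceptual rather than computational: the $S$-polynomial in the statement is taken with respect to $S^d$, whereas both the induction hypothesis and Lemma~\ref{Lemme sur les S-pol} naturally produce information about the $S$-polynomial taken with respect to $S^{d-1}$. Bridging this gap is exactly the role of the displayed identity together with Point~\ref{n et m fixés, d change} of Lemma~\ref{Lemme principal sur les S-pol}, which certifies that the discrepancy between the two $S$-polynomials is itself of $\left(S^d,\ w\right)$-type. The remaining verifications, namely the reducibility of $w$ under the relevant operators and the stability of $\left(S^d,\ w\right)$-type decompositions under linear combination, are routine.
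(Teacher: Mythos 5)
Your proof is correct and follows essentially the same route as the paper's: the same induction on $d$ with vacuous base case, the same decomposition of the $S$-polynomial at the current step into two error terms plus the $S$-polynomial at the previous step, the error terms handled by Point~\ref{n et m fixés, d change} of Lemma~\ref{Lemme principal sur les S-pol}, and the previous-step $S$-polynomial handled by the same case split (new branching via Point~\ref{the S-polynomial admits an upper decomposition} of Lemma~\ref{Lemme sur les S-pol}, old branching via the induction hypothesis upgraded by Point~\ref{n et m changent, d fixé} of Lemma~\ref{Lemme principal sur les S-pol}). The only differences are cosmetic: you index the step as $d-1\to d$ rather than $d\to d+1$, and you make explicit the closure of $\left(S^d,\ w\right)$-type decompositions under linear combination, which the paper uses implicitly.
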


\begin{proof}

We show the proposition by induction on $d$. The set $Q_0$ being empty, Proposition~\ref{type composition pour Sd} holds for $d\ =\ 0$. Assume that for every $\left(w,\ (n,\ m),\ (n',\ m')\right)\ \in\ Q_d$, ${S^d}_{n,m}(w)\ -\ {S^d}_{n',m'}(w)$ admits a $\left(S_d,\ w\right)$-type decomposition. We let
\[\begin{split}
&A\ =\ \left(S^d\right)_{n',m'}(w)\ -\ \left(S^{d+1}\right)_{n',m'}(w)\ \ ,\\
&B\ =\ \left(S^d\right)_{n,m}(w)\ -\ \left(S^{d+1}\right)_{n,m}(w),\\
&C\ =\ \left(S^d\right)_{n,m}(w)\ -\ \left(S^{d}\right)_{n',m'}(w).
\end{split}\]
We have
\[\left(S^{d+1}\right)_{n,m}(w)\ -\ \left(S^{d+1}\right)_{n',m'}(w)\ =\ A-B+C.\]
By construction, $S^{d+1}$ is smaller than $S^d$. Moreover, $\left(w,\ (n,\ m),\ (n',\ m')\right)$ being a critical branching, $w$ belongs to $\red{\left(S^d\right)_{n,m}}\cap\red{\left(S^d\right)_{n',m'}}$. Hence, from Point~\ref{n et m fixés, d change} of Lemma~\ref{Lemme principal sur les S-pol}, $A$ and $B$ admit a $\left(S^{d+1},\ w\right)$-type decomposition. It remains to show that $C$ admits a $\left(S^{d+1},\ w\right)$-type decomposition. By construction, $Q_{d+1}$ is equal to $P_d$, so that it contains $Q_{d}$ from Point~\ref{Croissance des paires critiques} of Lemma~\ref{Remarks procedure}. If $\left(w,\ (n,\ m),\ (n',\ m')\right)$ does not belong to $Q_d$, $C$ admits a $\left(S^{d+1},\ w\right)$-type decomposition from Point~\ref{the S-polynomial admits an upper decomposition} of Lemma~\ref{Lemme sur les S-pol}. If $\left(w\ ,\ (n,\ m),\ (n',\ m')\right)$ belongs to $Q_d$, $C$ admits a $\left(S^d,\ w\right)$-type decomposition by induction hypothesis. Hence, from Point~\ref{n et m changent, d fixé} of Lemma~\ref{Lemme principal sur les S-pol}, $C$ admits a $\left(S^{d+1},\ w\right)$-type decomposition. 

\end{proof}

Recall that the lower-bound of the operators $S^d$ is written $\overline{S}$. The last lemma we need to prove Theorem~\ref{proceduree de Buchberger par opérateur} is

\begin{lemma}\label{Dernier lemme}

\begin{enumerate}
\item\label{Idéaux engendrés} The sequence $\left(I_d\right)_{d\in\N}$ of ideals spanned by $\ker\left(S^d\right)$ is constant.
\item\label{Non réduit de la complétion} $\emph{Red}\left(\overline{S}\right)$ is equal to $\bigcup_{d\in\N}\emph{Red}\left(S^d\right)$.
\end{enumerate}

\end{lemma}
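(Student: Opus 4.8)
The plan is to prove the two points separately, with Point~\ref{Idéaux engendrés} relying on the concrete shape of the operators produced by {\tt normalisation}, and Point~\ref{Non réduit de la complétion} on the identification of reducible monomials with leading monomials of the kernel.

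For Point~\ref{Idéaux engendrés}, I would first observe that by definition of $\wedge$ we have $\ker\left(S^{d+1}\right)\ =\ \ker\left(S^d\right)\ +\ \ker\left(C^{F_d}\right)$, so $\ker\left(S^d\right)\ \subseteq\ \ker\left(S^{d+1}\right)$ and hence $I_d\ \subseteq\ I_{d+1}$. The content is the reverse inclusion, which I would obtain by showing $\ker\left(C^{F_d}\right)\ \subseteq\ I_d$. Since $C^{F_d}\ =\ \left(\wedge F_d\right)\vee\left(\vee\overline{F_d}\right)$, its kernel is contained in $\ker\left(\wedge F_d\right)\ =\ \sum_{T\in F_d}\ker(T)$, so it suffices to check that every operator in $F_d$ has its kernel inside $I_d$. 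Unwinding {\tt normalisation}, each operator of $F_d$ is of the form $T(f)$, hence has one-dimensional kernel $\K{f}$, where $f$ is either an $S$-polynomial $w-\left(S^d\right)_{n,m}(w)\ =\ w^{(n)}\left(w'-S^d(w')\right)w^{(m)}$ coming from $E_d$, or an element $w_1\left(w-S^d(w)\right)w_2$ added during the loop. In both cases $f$ is a two-sided multiple of an element $w'-S^d(w')$ or $w-S^d(w)$ of $\ker\left(S^d\right)$, so $f\ \in\ I_d$. This gives $\ker\left(C^{F_d}\right)\ \subseteq\ I_d$, whence $\ker\left(S^{d+1}\right)\ \subseteq\ I_d$ and $I_{d+1}\ \subseteq\ I_d$. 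An immediate induction then shows the sequence is constant.

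For Point~\ref{Non réduit de la complétion}, I would use the characterisation $\red{T}\ =\ \lm{\ker(T)}$, the set of leading monomials of nonzero elements of $\ker(T)$: the inclusion $\red{T}\ \subseteq\ \lm{\ker(T)}$ is immediate since $w-T(w)\ \in\ \ker(T)$ has leading monomial $w$ whenever $w$ is $T$-reducible, while the reverse inclusion holds because a monomial lying both in $\lm{\ker(T)}$ and in $\nf{T}$ would be equal to a combination of strictly smaller monomials, which is impossible. Granting this, since $\overline{S}\ \preceq\ S^d$ for every $d$, implication (\ref{Inclusion of images}) yields $\red{S^d}\ \subseteq\ \red{\overline{S}}$, hence $\bigcup_{d\in\N}\red{S^d}\ \subseteq\ \red{\overline{S}}$. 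For the converse, I recall $\ker\left(\overline{S}\right)\ =\ \sum_{d\in\N}\ker\left(S^d\right)$; since the kernels form the increasing chain established in Point~\ref{Idéaux engendrés}, this sum equals $\bigcup_{d\in\N}\ker\left(S^d\right)$. Thus any $w\ \in\ \red{\overline{S}}\ =\ \lm{\ker(\overline{S})}$ is the leading monomial of some nonzero $v\ \in\ \ker\left(S^d\right)$ for a suitable $d$, so $w\ \in\ \lm{\ker(S^d)}\ =\ \red{S^d}$, giving the reverse inclusion and the equality.

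The main obstacle is the reverse inclusion in Point~\ref{Idéaux engendrés}: one must track the {\tt normalisation} construction carefully to see that every generator of $\ker\left(C^{F_d}\right)$ lies in the ideal $I_d$, which hinges on the fact that the kernels of the constituent operators $T(f)$ are spanned by two-sided multiples of elements of $\ker\left(S^d\right)$. The characterisation $\red{T}\ =\ \lm{\ker(T)}$ used in Point~\ref{Non réduit de la complétion}, although its nontrivial inclusion requires the short linear-algebra argument sketched above, is otherwise routine and is the only real input there.
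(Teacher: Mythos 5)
Your proposal is correct. For Point~\ref{Idéaux engendrés} you follow essentially the paper's own argument: reduce to $\ker\left(C^{F_d}\right)\subseteq I_d$ via $\ker\left(C^{F_d}\right)\subseteq\ker\left(\wedge F_d\right)=\sum_{T\in F_d}\ker(T)$, and then check that {\tt normalisation} only produces operators $T(f)$ whose one-dimensional kernels $\K{f}$ are spanned by two-sided multiples of elements of $\ker\left(S^d\right)$. The paper compresses that unwinding into a single sentence and gets $\ker\left(C^{F_d}\right)\subseteq\ker\left(\wedge F_d\right)$ from condition \textbf{i} of the definition of a complement rather than from the explicit formula $C^{F_d}=\left(\wedge F_d\right)\vee\left(\vee\overline{F_d}\right)$, but the content is identical. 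For Point~\ref{Non réduit de la complétion}, however, you take a genuinely different route. The paper observes that the desired equality says precisely that the family $F=\left\{S^d\ \mid\ d\in\N\right\}$ is confluent (since $\overline{S}=\wedge F$ and $\nf{F}=\bigcap_d\nf{S^d}$), and then applies Newman's Lemma for reduction operators: the $S^d$ form a $\preceq$-decreasing chain, so any two of them are comparable and satisfy a composition identity that makes local confluence immediate. You instead argue directly by linear algebra: you establish the characterisation $\red{T}=\lm{\ker(T)}$ (both of your inclusions are sound --- the nontrivial one because a $T$-normal form in $\lm{\ker(T)}$ would be a combination of strictly smaller monomials) and combine it with $\ker\left(\overline{S}\right)=\sum_d\ker\left(S^d\right)=\bigcup_d\ker\left(S^d\right)$, the last equality holding because the kernels form an increasing chain. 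Your route is more elementary and self-contained, bypassing the Church--Rosser and local-confluence machinery entirely, at the cost of proving an auxiliary fact the paper never needs to make explicit; the paper's route is shorter given the rewriting results it has already imported, and it exposes the structural reason the equality holds, namely that a chain of reduction operators is automatically a confluent family.
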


\begin{proof}

Let us show Point~\ref{Idéaux engendrés}. By definition of the function {\tt normalisation}, the kernel of each element of $F_d$ is included in $I_d$. In particular, $\ker\left(\wedge F_d\right)\ =\ \sum_{T\in F_d}\ker\left(T\right)$ is also included in $I_d$. Moreover, $C^{F_d}$ being a complement of $F_d$, it is smaller than $\wedge F_d$, that is its kernel is included in the one of $\wedge F_d$. In particular, $\ker\left(C^{F_d}\right)$ is included in $I_d$, so that $\ker\left(S^{d+1}\right)$, which by definition is equal to $\ker\left(S^d\right)\ +\ \ker\left(C^{F_d}\right)$, is also included in $I_d$. Hence, the sequence $\left(I_d\right)_{d\in\N}$ is not increasing. Moreover, the sequence $\left(S^d\right)_{d\in\N}$ is not increasing by construction, which means that $\left(\ker\left(S^d\right)\right)_{d\in\N}$ is not decreasing. Hence, $\left(I_d\right)_{d\in\N}$ constant.

Let us show Point~\ref{Non réduit de la complétion}. The equality we want to prove means that the set $F\ =\ \left\{S^d\ \mid\ d\in\N\right\}$ is confluent. From Newman's Lemma (see the paragraph before Definition~\ref{definition of complement}) in terms of reduction operators, it is sufficient to show that $F$ is locally confluent. Let $f\ \in\ \K{X^*}$ and let $d$ and $d'$ be two integers which we assume to satisfy $d\ \geq\ d'$. In particular, we have $S^{d'}\ \preceq\ S^d$, so that $S^{d}\circ S^{d'}$ is equal to $S^{d'}$. Hence, $\left(S^{d}\circ S^{d'}\right)(f)$ and $S^{d}(f)$ are equal, so that $F$ is locally confluent.

\end{proof}

\begin{theorem}\label{proceduree de Buchberger par opérateur}

Let \emph{\textbf{A}} be an algebra and let $\left(X,\ <,\ S\right)$ be a presentation by operator of \emph{\textbf{A}}. The completed presentation of $\left(X,\ <,\ S\right)$ is a confluent presentation of \emph{\textbf{A}}.

\end{theorem}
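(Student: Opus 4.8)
The plan is to verify the two defining properties of a confluent presentation separately: first that $\left(X,\ <,\ \overline{S}\right)$ is a presentation of \textbf{A} at all, and then that this presentation is confluent in the sense of Proposition~\ref{Characterisation of confluent presentations}. For the first property, I would use that $\ker\left(\overline{S}\right)\ =\ \sum_{d\in\N}\ker\left(S^d\right)$ by definition of the lower-bound $\overline{S}\ =\ \bigwedge_{d}S^d$, so that the two-sided ideal $I\left(\ker\left(\overline{S}\right)\right)$ equals $\sum_{d}I\left(\ker\left(S^d\right)\right)$. By Point~\ref{Idéaux engendrés} of Lemma~\ref{Dernier lemme} the ideals $I_d\ =\ I\left(\ker\left(S^d\right)\right)$ are all equal to $I_0\ =\ I\left(\ker\left(S\right)\right)$, hence $I\left(\ker\left(\overline{S}\right)\right)\ =\ I\left(\ker\left(S\right)\right)$ and $\K{X^*}/I\left(\ker\left(\overline{S}\right)\right)$ is isomorphic to \textbf{A}.

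For confluence, by Proposition~\ref{Characterisation of confluent presentations} it suffices to show that for every critical branching $b\ =\ \left(w,\ (n,\ m),\ (n',\ m')\right)$ of $\left(X,\ <,\ \overline{S}\right)$, the $S$-polynomial $f\ =\ \left(\overline{S}\right)_{n,m}(w)\ -\ \left(\overline{S}\right)_{n',m'}(w)$ admits an $\left(\overline{S},\ w\right)$-type decomposition. The key preliminary step is to observe that such a branching already occurs at some finite stage of the procedure. Indeed, the reducibility conditions defining $b$ say that two inner factors of $w$ lie in $\red{\overline{S}}$; by Point~\ref{Non réduit de la complétion} of Lemma~\ref{Dernier lemme} we have $\red{\overline{S}}\ =\ \bigcup_{d}\red{S^d}$, so each of these factors is $S^{d_i}$-reducible for some $d_i$. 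Taking $d\ =\ \max(d_1,\ d_2)$ and using that $S^d\ \preceq\ S^{d_i}$ together with implication (\ref{Inclusion of images}), both factors are $S^d$-reducible, so $b$ is a critical branching of $\left(X,\ <,\ S^d\right)$; hence $b\ \in\ P_d\ =\ Q_{d+1}$ by construction and Point~\ref{Croissance des paires critiques} of Lemma~\ref{Remarks procedure}.

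I would then apply Proposition~\ref{type composition pour Sd} at step $d+1$: since $b\ \in\ Q_{d+1}$, the $S^{d+1}$-polynomial $g\ =\ \left(S^{d+1}\right)_{n,m}(w)\ -\ \left(S^{d+1}\right)_{n',m'}(w)$ admits an $\left(S^{d+1},\ w\right)$-type decomposition, and since $\overline{S}\ \preceq\ S^{d+1}$, Point~\ref{n et m changent, d fixé} of Lemma~\ref{Lemme principal sur les S-pol} upgrades it to an $\left(\overline{S},\ w\right)$-type decomposition. It remains to pass from $g$ to the genuine $\overline{S}$-polynomial $f$. Writing $f\ =\ g\ -\ A\ +\ B$ with $A\ =\ \left(\left(S^{d+1}\right)_{n,m}-\left(\overline{S}\right)_{n,m}\right)(w)$ and $B\ =\ \left(\left(S^{d+1}\right)_{n',m'}-\left(\overline{S}\right)_{n',m'}\right)(w)$, and noting that $w$ is both $\left(S^{d+1}\right)_{n,m}$- and $\left(S^{d+1}\right)_{n',m'}$-reducible ($b$ remaining a critical branching at every later stage), Point~\ref{n et m fixés, d change} of Lemma~\ref{Lemme principal sur les S-pol} furnishes $\left(\overline{S},\ w\right)$-type decompositions of $A$ and $B$. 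As the elements admitting an $\left(\overline{S},\ w\right)$-type decomposition form a subspace of $\K{X^*}$, the combination $g\ -\ A\ +\ B\ =\ f$ admits one too, which closes the argument via Proposition~\ref{Characterisation of confluent presentations}.

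The main obstacle is the stabilisation step, i.e. showing that an arbitrary critical branching of the limit operator $\overline{S}$ is already witnessed at some finite stage $S^d$; this is exactly where Point~\ref{Non réduit de la complétion} of Lemma~\ref{Dernier lemme}, itself the local-confluence argument for the descending chain $\left(S^d\right)_{d\in\N}$, does the essential work. Once the branching is localised at a finite stage, the rest is a routine telescoping that propagates the $\left(S^{d+1},\ w\right)$-type decomposition of Proposition~\ref{type composition pour Sd} down to an $\left(\overline{S},\ w\right)$-type decomposition using the two parts of Lemma~\ref{Lemme principal sur les S-pol}.
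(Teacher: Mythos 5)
Your proposal is correct and follows essentially the same route as the paper's proof: same use of Lemma~\ref{Dernier lemme} to identify the presentation and to localise a critical branching of $\overline{S}$ at a finite stage $d$ (the paper phrases your $\max(d_1,d_2)$ step as ``without loss of generality $d\geq d'$''), then the same telescoping decomposition $\spol{b}=C_d-B_d+A_d$ handled by Proposition~\ref{type composition pour Sd} and the two points of Lemma~\ref{Lemme principal sur les S-pol}. No gaps to report.
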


\begin{proof}

Let $\overline{S}$ be the lower-bound of the operators $S^d$.

First, we show that $\left(X,\  <,\ \overline{S}\right)$ is a presentation of \textbf{A}. From Point~\ref{Idéaux engendrés} of Lemma~\ref{Dernier lemme}, the ideal spanned by the kernels of the operators $S^d$ is equal to the ideal $I$ spanned by the kernel of $S^0\ =\ S$. In particular, the ideal spanned by $\ker\left(\overline{S}\right)\ =\ \sum_{d\in\N}\ker\left(S^d\right)$ is equal to $I$. Hence, $\left(X,\ <,\ S\right)$ being a presentation of \textbf{A}, $\left(X,\ <,\ \overline{S}\right)$ is also a presentation of \textbf{A}.

Let us show that this presentation is confluent. From the Diamond Lemma, it is sufficient to show that for each critical branching
$b\ =\ \left(w,\ (n,\ m),\ (n',\ m')\right)$ of $\left(X,\ <,\ \overline{S}\right)$, the $S$-polynomial $\spol{b}$ admits a $\left(\overline{S},\ w\right)$-type decomposition. From Point~\ref{Non réduit de la complétion} of Lemma~\ref{Dernier lemme}, there exist integers $d$ and $d'$ such that $w\ \in\ \red{\left(S^d\right)_{n,m}}\cap\red{\left(S^{d'}_{n',m'}\right)}$. Without lost of generalities, we may assume that $d$ is greater or equal to $d'$, so that $b$ is a critical branching of $\left(X,\ <,\ S^d\right)$, that is it belongs to $P_d\ =\ Q_{d+1}$. We let
\[\begin{split}
A_d\ &=\ \left(S^{d+1}\right)_{n',m'}(w)\ -\ \overline{S}_{n',m'}(w),\\
B_d\ &=\ \left(S^{d+1}\right)_{n,m}(w)\ -\ \overline{S}_{n,m}(w),\\
C_d\ &=\ \left(S^{d+1}\right)_{n,m}(w)\ -\ \left(S^{d+1}\right)_{n',m'}(w).
\end{split}\]
We have
\begin{equation}\label{Ultime décomposition du S-pol}
\spol{b}\ =\ A_d\ -\ B_d\ +\ C_d.
\end{equation}
From Proposition~\ref{type composition pour Sd}, $b$ being an element of $Q_{d+1}$, $C_d$ admits a $\left(S^{d+1},\ w\right)$-type decomposition, so that it admits a $\left(\overline{S},\ w\right)$-type decomposition from Point~\ref{n et m changent, d fixé} of Lemma~\ref{Lemme principal sur les S-pol}. Moreover, $S^{d+1}$ being smaller than $S^d$, $w$ belongs to $\red{\left(S^{d+1}\right)_{n,m}}\cap\red{\left(S^{d+1}\right)_{n',m'}}$. The operator $\overline{S}$ being smaller than $S^{d+1}$, $A_d$ and $B_d$ also admit a $\left(\overline{S},\ w\right)$-type decomposition from Point~\ref{n et m fixés, d change} of Lemma~\ref{Lemme principal sur les S-pol}. Hence, from (\ref{Ultime décomposition du S-pol}), $\spol{b}$ admits a $\left(\overline{S},\ w\right)$-type decomposition.

\end{proof}

\begin{example}\label{Example of procedure}

 In Section~\ref{Exemples annexes}, we compute the completed presentation of Example~\ref{Definition of the braided monoid}. It is given by the operator defined by $\overline{S}(yz)\ =\ x$, $\overline{S}(zx)\ =\ xy$, $\overline{S}(yxy)\ =\ xx$, $\overline{S}(yxx)\ =\ xxz$, $\overline{S}(yxxx)\ =\ xxxy$ and $\overline{S}(w)\ =\ w$ for all other monomial $w$.
  
  \end{example}

\subsection{Example}\label{Exemples annexes}

In this section, we compute the completed presentation of Example~\ref{Example of procedure}. Before that, we show how to use Gaussian elimination to compute lattice operations and completion for reduction operators relative to totally ordered finite sets. We use the SageMath software, written in Python.

\paragraph{Lattice operations and completion.}

Let $\left(G,\ <\right)$ be a totally ordered finite set. The set $G$ being finite, the Gaussian elimination provides a unique basis $\M{B}$ of any suspace $V\ \subseteq\ \K{G}$ such that for every $e\ \in\ \M{B}$, $\lc{e}$ is equal to 1 and, given two different elements $e$ and $e'$ of $\M{B}$, $\lgen{e'}$ does not belong to the decomposition of $e$. The operator $T\ =\ \noy{V}$ satisfies $T(\lgen{e})\ =\ \lgen{e}\ -\ e$ for every $e\ \in\ \M{B}$ and $T(g)\ =\ g$ if $g$ is not a leading generator of $\M{B}$. Moreover, we represent the subspaces of $\K{G}$ by lists of generating vectors and for any list of vectors {\tt L}, let {\tt reducedBasis(L)} be the basis of $\K{{\tt L}}$ obtained by Gaussian elimination.

First, we define the function {\tt operator} which takes as input a list of vectors {\tt L} and returns $\noy{\K{{\tt L}}}$. We deduce the functions which compute the lattice operations of $\RO$.

\vspace{0.1cm}

\begin{lstlisting}
def operator(G):
    L=reducedBasis(G)
    n=len(L[0])
    V=VectorSpace(QQ,n)
    v=V.zero()
    G=(lg(L[0])-1)*[v]+[L[0]]
    k=len(L)
    for i in [1..k-1]:
        G=G+(lg(L[i])-lg(L[i-1])-1)*[v]+[L[i]]
    G=G+(n-lg(L[k-1]))*[v]
    return identity_matrix(QQ,n)-matrix(G).transpose()

def lowerBound(T_1,T_2):
    V_1,V_2=kernel(T_1.transpose()),kernel(T_2.transpose())
    G_1,G_2=basis(V_1),basis(V_2)
    L_1,L_2=reducedBasis(G_1),reducedBasis(G_2)
    G=L_1+L_2
    L=reducedBasis(G)
    return operator(L)

def upperBound(T_1,T_2):
    V_1,V_2=kernel(T_1.transpose()),kernel(T_2.transpose())
    V=V_1.intersection(V_2)
    G=basis(V)
    L=reducedBasis(G)
    return operator(L)
\end{lstlisting}

\vspace{0.1cm}

By definition of the $F$-complement, we need an intermediate function with input a reduction operator $T$ and output $\noy{\K{\nf{T}}}$. We define this function before defining the one of the $F$-complement.

\vspace{0.1cm}

\begin{lstlisting}
def tilde(T):
    n,L=T.nrows(),[]
    for i in [0..n-1]:
        j,k=i,n-i-1
        if T[i,i]==1: L=L+[vector(j*[0]+[1]+k*[0])]
    return operator(L)
 
def complement(L):
    n,C,T=len(L),L[0],tilde(L[0])
    for i in [1..n-1]: C=lowerBound(C,L[i])
    for j in [1..n-1]: T=upperBound(T,tilde(L[j]))
    return lowerBound(C,T)
\end{lstlisting}

\paragraph{Example.}

Now, we use our implementation to compute the completed presentation of Example~\ref{Definition of the braided monoid}: we consider the algebra \textbf{A} presented by $\left(X,\ <,\ S\right)$ where $X=\{x,\ y,\ z\}$, $<$ is the deg-lex order induced by $x\ <\ y\ <\ z$ and $S(yz)\ =\ x$, $S(zx)\ =\ xy$ and $S(w)\ =\ w$ for every monomial $w$ different from $yz$ and $zx$.

Recall that $S^d$ denotes the operator of the presentation at the beginning of step $d$ of the procedure, $P_d$ is the set of critical branchings of $\left(X,\ <,\ S^d\right)$, $Q_d\ =\ P_{d-1}$, $E_d\ =\ \Big\{w\ -\ {S^{d}}_{n,m}(w)\ \mid\ (w,\ (n,\ m),\ (n',\ m'))\ \in\ P_d\setminus Q_d\Big\}$ and $F_d\ =\ {\tt normalisation}(E_d,\ S^d)$. Moreover, we represent reduction operators by matrices. For that, we use that the operators appearing in the procedure act nontrivially on finite-dimensional subspaces of $\K{X^*}$ spanned by an ordered set of monomials $w_1\ <\ w_2\ <\ \cdots\ <\ w_n$.

At the first step, we have $d\ =\ 0$. The presentation $\left(X,\ <,\ S^0\right)$ has one critical branching $b_1\ =\ \left(yzx,\ (1,\ 0),\ (0,\ 1)\right)$ and we have $P_0\ =\ \left\{b_1\right\}$ and $E_0\ =\ \Big\{yzx-xx,\ yzx-yxy\Big\}$. We have $F_0\ =\ \Big\{T_1,\ T_2\Big\}$ where the matrices of the restrictions of $T_1$ and $T_2$ to the subspace spanned by $xx\ <\ yxy\ <\ yzx$ are
\[T_1\ =\ \begin{pmatrix}
1&0&1\\
0&1&0\\
0&0&0
\end{pmatrix}\ \ \text{and}\ \
T_2\ =\ \begin{pmatrix}
1&0&0\\
0&1&1\\
0&0&0
\end{pmatrix}. \]
that is $T_1(yzx)\ =\ xx$ and $T_2(yzx)\ =\ yxy$. The matrice of $C^{F_0}\ =\ \text{complement}\left(\left[T_1,\ T_2\right]\right)$ restricted to $\K{\left\{xx,\ yxy,\ yzx\right\}}$ is
\[\ \begin{pmatrix}
1&1&0\\
0&0&0\\
0&0&1
\end{pmatrix},
\]
The operator $S^1\ =\ S\wedge C^{F_0}$ can be computed by restriction to the subspace spanned by $x\ <\ xx\ <\ xy\ <\ yz\ <\ zx\ <\ yxy$ and the matrices of the restrictions of $S^0$ and $C^{F_0}$ to this subspace are
\[S^0\ =\ \begin{pmatrix}
1&0&0&1&0&0\\
0&1&0&0&0&0\\
0&0&1&0&1&0\\
0&0&0&0&0&0\\
0&0&0&0&0&0\\
0&0&0&0&0&1
\end{pmatrix}\ \ \text{and}\ \
C^{F_0}\ =\ \begin{pmatrix}
1&0&0&0&0&0\\
0&1&0&0&0&1\\
0&0&1&0&0&0\\
0&0&0&1&0&0\\
0&0&0&0&1&0\\
0&0&0&0&0&0
\end{pmatrix}.\]
We obtain that $S^1$ is the operator defined by $S^1(yz)\ =\ x$, $S^1(zx)\ =\ xy$, $S^1(yxy)\ =\ xx$ and $S^1(w)\ =\ w$ for every monomial $w$ different from $yz$, $zx$ and $yxy$.

The presentation $\left(X,\ <,\ S^1\right)$ has two new critical branchings $b_2$ and $b_3$ equal to $\left(yxyz,\ (2,\ 0),\ (0,\ 1)\right)$ and $\left(yxyxy,\ (2,\ 0),\ (0,\ 2)\right)$, respectively. We have $P_1\ =\ \left\{b_1,\ b_2,\ b_3\right\}$, $P_1\setminus Q_1\ =\ \left\{b_2,\ b_3\right\}$ and $E_1\ =\ \Big\{yxyz\ -\ xxz,\ yxyz\ -\ yxx,\ yxyxy\ -\ xxxy,\ yxyxy\ -\ yxxx\Big\}$. Moreover, $F_1\ =\ {\tt normalisation}\left(E_1,\ S^1\right)$ is equal to
\[\left\{
\begin{split}
&T_3\ =\ T\left(yxyz-xxz\right),\ T_4\ =\ T\left(yxyz-yxx\right)\\
&T_5\ =\ T\left(yxyxy-xxxy\right),\ T_6\ =\ T\left(yxyxy-yxxx\right)
\end{split}
\right\},\]
where $T(f)\ =\ \noy{\K{f}}$. The restriction of $C^{F_1}$ to $\K{\left\{xxz,\ yxx,\ xxxy,\ yxxx,\ yxyz,\ yxyxy\right\}}$ is
\[C^{F_1}\ =\ \begin{pmatrix}
1&1&0&0&0&0\\
0&0&0&0&0&0\\
0&0&1&1&0&0\\
0&0&0&0&0&0\\
0&0&0&0&1&0\\
0&0&0&0&0&1
\end{pmatrix},\]
and we obtain that $S^2\ =\ S^1\wedge C^{F_1}$ is defined by $S^2(yz)\ =\ x$, $S^2(zx)\ =\ xy$, $S^2(yxy)\ =\ xx$, $S^2(yxx)\ =\ xxz$, $S^2(yxxx)\ =\ xxxy$ and all other monomial is a normal form for $S^2$.

The computation of the operator $C^{F_2}$ gives the identity operator of size 11, which corresponds to the monomials $x^4\ <\ x^3y\ <\ x^2zx\ <\ yx^3\ <\ x^5\ <\ x^3yz\ <\ yx^3z\ <\ yxyx^2\ <\ x^3yxyx\ <\ yx^4y\ <\ yxyx^3$. Hence, no new critical branching is created at this step and the procedure stops.

\newpage

\bibliography{Biblio}

\end{document}